\newtheorem{lemma}{$\mathbf{Lemma}$}
\newtheorem{proposition}{$\mathbf{Proposition}$}
\begin{document}

\title{\textsc{Channel Estimation for Two-Way Relay Networks in the Presence of Synchronization Errors}
\author{\IEEEauthorblockN{Xinqian Xie, Mugen Peng, Yonghui Li, Wenbo Wang, and H. Vincent Poor}
\thanks{Copyright (c) 2014 IEEE. Personal use of this material is permitted. However, permission to use this material for any other purposes must be obtained from the IEEE by sending a request to pubs-permissions@ieee.org.}
\thanks{X. Xie, M. Peng, and W. Wang are with the Key Laboratory of
Universal Wireless Communications for Ministry of Education, Beijing
University of Posts \text{\&} Telecommunications, China (e-mail:
xxmbupt@gmail.com; pmg@bupt.edu.cn; wbwang@bupt.edu.cn). Y. Li is
with the School of Electrical and Information Engineering,
University of Sydney, Sydney, NSW, 2006, Australia (e-mail:
yonghui.li@sydney.edu.au). H. V. Poor is with the School of
Engineering and Applied Science, Princeton University, Princeton,
NJ, USA (e-mail: poor@princeton.edu).}
\thanks{The work of X. Xie, M. Peng and W. Wang was supported in part by the National Natural Science Foundation of China (Grant No. 61222103), the National Basic Research Program of China (973 Program) (Grant No. 2013CB336600), the State Major Science and Technology Special Projects (Grant No.
2013ZX03001001). The work of H. V. Poor was supported in part by the
National Science Foundation under Grant ECCS-1343210.}}}  \maketitle
\vspace{-0.6in}
\begin{abstract}
This paper investigates pilot-aided channel estimation for two-way
relay networks (TWRNs) in the presence of synchronization errors
between the two sources. The unpredictable synchronization error
leads to time domain offset and signal arriving order (SAO)
ambiguity when two signals sent from two sources are superimposed at
the relay. A two-step channel estimation algorithm is first
proposed, in which the linear minimum mean-square-error (LMMSE)
estimator is used to obtain initial channel estimates based on pilot
symbols and a linear minimum error probability (LMEP) estimator is
then developed to update these estimates. Optimal training sequences
and power allocation at the relay are designed to further improve
the performance for LMMSE based initial channel estimation. To
tackle the SAO ambiguity problem, the generalized likelihood ratio
testing (GLRT) method is applied and an upper bound on the SAO
detection error probability is derived. By using the SAO
information, a scaled LMEP estimation algorithm is proposed to
compensate the performance degradation caused by SAO detection
error. Simulation results show that the proposed estimation
algorithms can effectively mitigate the negative effects caused by
asynchronous transmissions in TWRNs, thus significantly
outperforming the existing channel estimation algorithms.
\end{abstract}
\vspace{-0.2in}
\begin{keywords}
Channel estimation, two-way relay, synchronization error.
\end{keywords}
\newpage

\IEEEpeerreviewmaketitle
\section{INTRODUCTION}
Wireless relay technology has attracted considerable attention due to its capability of providing spatial diversity and extending system coverage\cite{01}. However, relay-assisted transmission consumes extra bandwidth, leading to a loss in system throughput\cite{02}. Network coding has been shown to be an effective technique to improve the spectral efficiency of wireless networks by allowing multiple nodes to transmit simultaneously\cite{03}. In two-way relay networks (TWRNs), physical-layer network coding (PNC) allows two sources to exchange information within two time-slots in contrast to the four time-slots transmission needed for the conventional relaying protocols, leading to a doubled system throughput\cite{04}.

However, most existing work on PNC in TWRNs has assumed the knowledge of perfect channel state information (CSI). The performance of PNC in TWRNs is actually very sensitive to channel estimation errors because instantaneous CSI is required not only for coherent detection but also for self-interference cancelation at each source\cite{05}. In most practical systems, the receiver acquires the CSI based on pilot symbols\cite{06}. Much work has been devoted to solving the channel estimation problem for relay channels using pilot-based solutions\cite{07}. In \cite{08}, \emph{Gao}, \emph{et al}. first studied the training based channel estimation issue for TWRNs and presented a novel and effective solution, which has laid the groundwork for the enhancement of channel estimation as well as the training design in network coded relaying channels. A general framework to estimate the unknown timing and channel parameters for cooperative relay systems was proposed in \cite{09}. In \cite{10}, a channel estimation method based on a complex exponential basis expansion model was developed for time-varying bidirectional relaying channels. All these works revealed that channel estimation for relay networks, especially TWRNs, is quite different from that for conventional point-to-point wireless systems\cite{11} \cite{12}.

Most existing channel estimation techniques for TWRNs have assumed perfect synchronization between the two source nodes, such that the two training sequences sent by the two sources can be perfectly superimposed at the relay\cite{13}. However, in practice, the two sources cannot synchronize with each other perfectly, and inevitable synchronization errors will occur\cite{14}. This causes misaligned reception of the two training sequences at the relay\cite{15}. Clearly, this unpredictable sequence misalignment may destroy the orthogonality of the two sequences and lead to severe estimation errors. In order to alleviate the effects of synchronization errors on channel estimation, the relay strategy and training sequences need to be redesigned. Furthermore, the sequence arriving order (SAO) ambiguity caused by symbol misalignment is another key issue to be solved in asynchronous transmissions. This requires the receiver to perform joint SAO detection and channel estimation because the source node needs to detect the SAO before estimating the channels.

To our knowledge, the impact of synchronization errors on channel estimation performance in TWRNs is still unclear, and how to compensate estimation performance degradation caused by unpredictable symbol misalignment and SAO ambiguity has not been addressed in the open literature. The motivation of this work is to develop effective channel estimation algorithms for typical half-duplex amplify-and-forward (AF) TWRNs in the presence of synchronization errors. A joint SAO detection and channel estimation algorithm is developed based on the generalized likelihood ratio testing (GLRT) \cite{16} by formulating this joint optimization problem as a composite hypothesis test\cite{17}. Our main contributions are summarized as follows:
\begin{itemize}
\item We develop a linear minimum error probability (LMEP) based estimation algorithm to minimize the error probability of data detection. We design the optimal training sequences and power allocation to further improve the estimation performance under synchronization errors.
\item By taking into account the SAO ambiguity, we formulate the channel estimation problem as a composite hypothesis test, which allows the GLRT method to be used for the SAO problem. A scaled LMEP algorithm is proposed to minimize the performance degradation caused by the SAO detection error.
\item The performance of the proposed algorithms is analyzed and evaluated. Results show that the proposed algorithms can effectively mitigate the negative effects caused by asynchronous transmissions, and achieve significant performance gain in contrast to the existing channel estimation algorithms.
\end{itemize}
The rest of the paper is organized as follows. Section II introduces the system model and transmission scheme. In Section III, an LMEP estimation algorithm is presented for the case of perfect SAO information. The SAO detection and a scaled LMEP algorithm are presented in Section IV. The simulation results are shown in Section V, followed by the conclusions in Section VI.

\textbf{Notations}: The transpose, inverse, pseudo-inverse and Hermitian of a matrix are denoted by $\left(\cdot\right)^{T}$, $\left(\cdot\right)^{-1}$, $\left(\cdot\right)^{\dag}$ and $\left(\cdot\right)^{H}$, respectively; $\|\cdot\|$ denotes the two-norm of a vector; $|\cdot|$, $\angle\left(\cdot\right)$ and $\mathfrak{R}\{\cdot\}$ denote the magnitude, phase and the real
part of a complex argument, respectively; $\lfloor\cdot\rfloor$ denotes the the largest integer that not greater than the real argument; $\mathrm{tr}\!\left(\cdot\right)$ denotes the trace of a matrix; $\otimes$ denotes the Kronecker product; $\mathbf{diag}\left(e_{1},e_{2},\ldots,e_{N}\right)$ is the $N\!\times\! N$ diagonal matrix. $\mathbf{I}_{K}$ represents the $K \! \times\! K$ dimensional unitary diagonal matrix. $\mathcal{E}\{\cdot\}$ denotes the expectation of random variables.

\section{System Model}
Consider bidirectional communication between two source nodes ($\mathbb{S}_{i},i\!=\!1,2$) with the assistance of a relay node $\mathbb{R}$ as shown in Fig.\ref{fig_0}. All the three nodes are equipped with a single antenna, and the relay operates in half-duplex mode. The channel coefficient between $\mathbb{S}_{i}$ and $\mathbb{R}$ is denoted by $h_{i}$, and the transmitting power of $\mathbb{S}_{i}$ is denoted by $P_{i}$. The symbol period is denoted by $T_{s}$, and $p\left(t - nT_{s}\right)$ denotes the pulse shaping function for the baseband signal as
\begin{subnumcases}
{p\left(t\right)=}
1, & $0\leq t \leq T_{s}$,\\
0, & else.
\end{subnumcases}
For easy reference, the key notations are summarized in Table I.
In a TWRN, $\mathbb{S}_{1}$ and $\mathbb{S}_{2}$ first simultaneously transmit signals to $\mathbb{R}$. The signal sent by $\mathbb{S}_{i}$ is composed of three parts:
\begin{itemize}
\item a pilot sequence containing $N$ symbols denoted by $\big\{t_{i}\!\left[n\right]\!\big\}_{n\!=\!1,\ldots,N}$ with $|t_{i}\!\left[n\right]|^{2} \!=\! 1$,
\item a guard space of $L$ symbol length, and
\item a data sequence of $M$ symbols denoted by $\big\{s_{i}\!\left[m\right]\!\big\}_{m\!=\!1,\ldots,M}$ with $\mathcal{E}\!\big\{|s_{i}\!\left[m\right]|^{2}\big\}\! = \!1$.
\end{itemize}
Owing to the imperfect synchronization, the two signal sequences sent from the two sources may not arrive at $\mathbb{R}$ at the same time.

We consider a quasi-static fading channel, for which the channels remain constant within one transmission block but vary from one block to another. The fading coefficient $h_{i}$ is assumed to be a circularly symmetric complex Gaussian random variable with zero mean and variance $\upsilon_{i}$, and $h_{1}$ and $h_{2}$ are independent from each other. Both the sources and relay are assumed to have the full knowledge of the training sequences $\mathbf{t}_{i}$, channel variance $\upsilon_{i}$ as well as the noise variance. Without loss of generality, to avoid interference between the data and pilot symbols, we assume that the signal transmitted by $\mathbb{S}_{1}$ arrives at the relay priori to that of $\mathbb{S}_{2}$ by a time offset ${\tau}$ where $0 \!\leq\! {\tau}\! \leq\! LT_{s}$.

\begin{table}
\caption{Key Notations} \center
\begin{tabular}{c p{0.4\textwidth} l}
\hline
$h_{i}$ & $\mathbb{S}_{i}$-$\mathbb{R}$ channel \\
$\upsilon_{i}$ & Variance of $h_{i}$\\
$h_{a}$ & $\triangleq h_{1}^{2}$ \\
$\upsilon_{a}$ & Variance of $h_{a}$\\
$h_{b}$ & $\triangleq h_{1}h_{2}$ \\
$\upsilon_{b}$ & Variance of $h_{b}$\\
$\mathbf{h}$ & $\triangleq \left[h_{1}, h_{2}\right]^{T}$ \\
$\mathbf{\Theta}$ & $\triangleq\left[h_{a}, h_{b}\right]^{T}$\\
$\delta_{\Sigma}$ & MSE of $\hat{\mathbf{\Theta}}$\\
$T_{s}$ & Symbol period\\
$N_{S0}$ & Noise power at $\mathbb{S}_{i}$\\
$N_{R0}$ & Noise power at $\mathbb{R}$\\
$\mathbf{t}_{i}$ & Training sequence from $\mathbb{S}_{i}$ \\
$N$ & Training sequence length \\
$\tau$ & Time domain offset\\
$n_{\tau}$ & $\triangleq\lfloor\frac{\tau}{T_{s}}\rfloor$\\
$\lambda$ & $\triangleq\tau\!-\!n_{\tau}T_{s}$\\
$\vartheta$ & Sequence arriving order\\
$\mathcal{H}_{\vartheta}$ & Hypothesis $\vartheta$\\
\hline
\end{tabular}
\end{table}

\subsection{Pilot Transmission}
The received pilot signal at $\mathbb{R}$ can be expressed as
\begin{align}
x_{\!R}\!\left(t\right) \!=\!
\sum_{n=1}^{N}\!\bigg\{\!\!\sqrt{\!P_{\!1}}h_{\!1}t_{\!1}\!\left[n\right]\!p\big(t \!-\!
\left(n\! \!-\! \!1\right)\!T_{s}\big)\! +\!
\sqrt{\!P_{2}}h_{2}t_{2}\!\left[n\right]\!p\big(t \!-\! \left(n \!\!-\!\!
1\right)\!T_{s}\! -\! \tau\big)\!\bigg\}\! + \!w_{R}\!\left(t\right),
\end{align}
where $w_{R}\!\left(t\right)$ represents additive white Gaussian noise (AWGN) at $\mathbb{R}$ with the average power $N_{R0}$. Before forwarding the signals to the two sources, the relay amplifies $x_{R}\!\left(t\right)$ by a function $\gamma\!\left(t\right)$
\begin{subnumcases}
{\gamma\left(t\right)\!=\!}
\gamma_{1}, & $0\!\leq\! t \!\leq\! \tau$,\\
\gamma_{S}, & $\tau \!<\! t\!<\! N T_{s}$,\\
\gamma_{2},& $N T_{s}\!\leq\! t \!\leq\! N T_{s}\! + \!\tau$.
\end{subnumcases}
Thus, the signals forwarded by the relay satisfy the following energy constraint at relay
\begin{align}\label{PowerConstraint}
\sum_{i=1}^{2}\!\gamma_{i}^{2}\frac{\tau}{T_{s}}\big(\upsilon_{i}P_{i}T_{s}\!+\!N_{R0}\big)
\!+\!\gamma_{S}^{2}\left(N\!-\!\frac{\tau}{T_{s}}\right)\!\left[\sum_{i=1}^{2}\upsilon_{i}
P_{i}T_{s}\!+\!N_{R0}\right]\!=\!E_{r},
\end{align}
where $E_{r}$ represents the energy consumption of the pilot signals at $\mathbb{R}$ per block. Without loss of generality, let us consider the problem at $\mathbb{S}_{1}$, and that of $\mathbb{S}_{2}$ can be handled in similar way. The received pilot signal at $\mathbb{S}_{1}$ is given by
\begin{align}
x_{S\!1}\!\left(t\right)\!=\!\gamma\!\left(t\right)\!
\sum_{n=1}^{N}\!\bigg\{\!\!\sqrt{\!P_{1}}h_{1}^{2}t_{1}\!\left[n\right]p\big(t\!
- \!\left(n \!\!-\!\! 1\right)\! T_{s}\big) \!+\!
\sqrt{\!P_{2}}h_{1}h_{2}t_{2}\!\left[n\right]p\big(t\! -\! \left(n\!\! -\!\!
1\right)\!T_{s} \!-\! \tau\big)\!\!\bigg\} \!+\!
\gamma\!\left(t\right)\!w_{R}\!\left(t\right)\! +\! w_{S}\!\left(t\right)
\end{align}
where $w_{S1}\!\left(t\right)$ is AWGN at $\mathbb{S}_{1}$ with the average power $N_{S0}$.

\subsection{Data Transmission}
Similarly, the received signal at $\mathbb{S}_{1}$ can be written as
\begin{align}
y_{S}\!\left(t\right)\!=\!\alpha\!
\sum_{m=1}^{M}\!\bigg\{\!\!\underbrace{\sqrt{\!P_{1}}h_{1}^{2}s_{1}\!\left[m\right]p\big(t
\!-\! \left(m\!\! -\!\! 1\right)\! T_{s}\big)}_{\text{self-interference}}\! +\!
\sqrt{\!P_{2}}h_{1}h_{2}\underbrace{s_{2}\!\left[m\right]p\big(t\! -\!
\left(m \!\!-\!\! 1\right)\!T_{s}\! -\! \tau\big)}_{\text{desired-signal}}\!\bigg\} \!+\! \alpha n_{R}\!\left(t\right)\! +\! n_{S}\!\left(t\right)
\end{align}
where $n_{R}\!\left(t\right)$ represents AWGN with variance $N_{R0}$ at $\mathbb{R}$, and $n_{S}\!\left(t\right)$ is AWGN with variance $N_{S0}$ at $\mathbb{S}_{1}$. The power scaling factor $\alpha$ is given by
\begin{align}
\alpha \!=\! \sqrt{\frac{P_{r}}{\upsilon_{1} P_{1} \!+\! \upsilon_{2} P_{2}
\!+\! N_{R0}}},
\end{align}
which ensures that the transmit power of data signals does not exceeds the power constraint $\left(MT_{s}\! +\! \tau\right)P_{r}$. Since $\mathbb{S}_{1}$ has the knowledge of $s_{1}\!\left[m\right]$, in order to remove the self-interference and recover $s_{2}\!\left[m\right]$, $\mathbb{S}_{1}$ needs to estimate the instantaneous composite channel coefficients $h_{a} \!\triangleq\! h_{1}^{2}$ and $h_{b} \!\triangleq \!h_{1} h_{2}$. Meanwhile, SAO ambiguity may occur when the two signals are superimposed at $\mathbb{R}$. The source node needs to estimate the required channel coefficients and detect the SAO.

\begin{figure}[!h]
\center
  \includegraphics[width=0.8\textwidth]{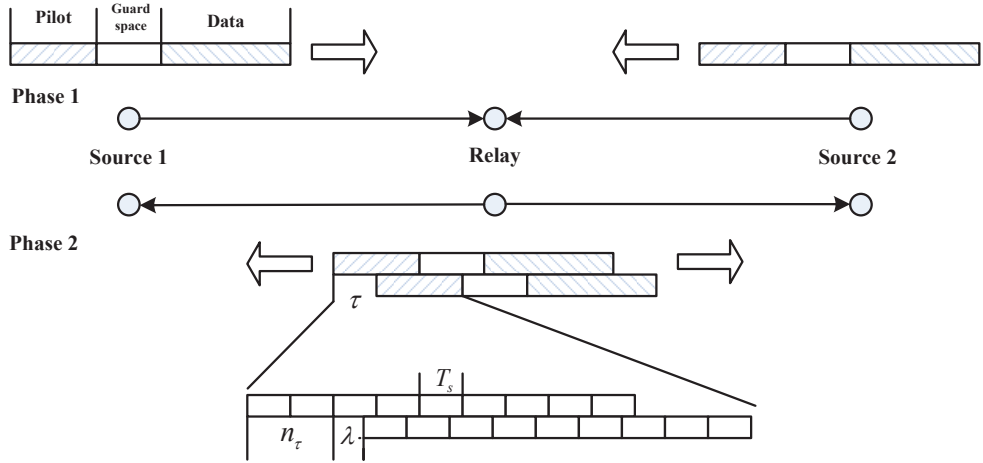}
 \caption{System model and transmission scheme.}\label{fig_0}
\end{figure}

\section{Channel Estimation in the Presence of Synchronization Errors}
In this section, we develop the channel estimation algorithm for the TWRN with synchronization errors with given SAO. Let $\tau$ represent the time domain misalignment duration between the two signals and it can be expressed as
\begin{align}\label{newEQ2}
\tau \!=\! \underbrace{\lfloor\tau/T_{s}\rfloor}_{n_{\tau}} \cdot T_{s}
+ \underbrace{\tau\! -\! n_{\tau} T_{s}}_{\lambda}.
\end{align}
For simplicity, we assume $P_{s}\! =\! P_{1}\! =\! P_{2}$ and $\upsilon\! =\! \upsilon_{1}\! =\! \upsilon_{2}$ leading to $\gamma_{1}\! =\!\gamma_{2}\! = \!\gamma_{I}$. The signal samples, denoted by $x_{S\!1}\!\left[k\right]_{k\!=\!1,\ldots,2N+1}$, can be expressed as follows \cite{15}:

For $1\!\leq\! k\! \leq\! n_{\tau}$,
\begin{align}
x_{S\!1}\!\left[k\right]
\!=\!\gamma_{I}h_{a}t_{1}\!\left[k\right]\!
+ \!\gamma_{I}h_{1}w_{R}\!\left[k\right] \!+\! w_{S\!1}\!\left[k\right].
\end{align}

For $k\!=\!n_{\tau}\!+\!2i\!-\!1$ with $1\!\leq \!i\! \leq\! \left(N\!-\!n_{\tau}\!+\!1\right)$,
\begin{align}
x_{S\!1}\!\left[k\right]\!=\!\gamma_{S}\sqrt{\frac{\lambda}{T_{s}}}\big(h_{a}t_{1}\!\left[i\!+\!n_{\tau}\right]
\!+\! h_{b}t_{2}\!\left[i\! -\! 1\right]\big)\! +\! \gamma_{S}h_{1}w_{R}\!\left[k\right] \!+\! w_{S1}\!\left[k\right],
\end{align}
where $t_{1}\!\left[N\!+\!1\right] \!=\! 0$ and $t_{2}\!\left[0\right] \!=\! 0$.

For $k\!=\!n_{\tau}\!+\!2i$ with $1 \!\leq \!i\! \leq \!\left(N\!-\!n_{\tau}\right)$,
\begin{align}
x_{S\!1}\!\left[k\right] \!=\!
\gamma_{S}\sqrt{\frac{T_{s}\!-\!\lambda}{T_{s}}}\big(h_{a}t_{1}\!\left[i\!+\!n_{\tau}\right]
\!+\! h_{b}t_{2}\!\left[i\right]\big)\! +\!\gamma_{S}h_{1}w_{R}\!\left[k\right] \!+\! w_{S\!1}\!\left[k\right].
\end{align}

For $\left(2N\!-\!n_{\tau}\!+\!2\right)\! \leq \!k\! \leq \!2N \!+\! 1$,
\begin{align}
x_{S\!1}\!\left[k\right]\!=\!
\gamma_{I}h_{b}t_{2}\left[k\!-\!N\!-\!1\right]\!
+\! \gamma_{I}h_{1}w_{R}\!\left[k\right]\! +\! w_{S\!1}\!\left[k\right].
\end{align}

Hence, $w_{R}\!\left[k\right]$ and $w_{S\!1}\!\left[k\right]$ satisfy the complex Gaussian distribution with mean zero and variance $\frac{N_{R0}}{P_{s}T_{s}}$ and $\frac{N_{S0}}{P_{s}T_{s}}$, respectively. Let us denote $\mathbf{t}_{i}$ to be the vector by stacking $t_{i}\!\left[n\right]$'s. Note that, only the first $\left(2N\!-\!\tau\!+\!1\right)$ observations of $\mathbf{x}_{S\!1}\!\left[k\right]$'s contain the entries of $\mathbf{t}_{1}$, while the entries of $\mathbf{t}_{2}$ are only in the last $\left(2N\!-\!\tau\!+\!1\right)$ observations of $\mathbf{x}_{S\!1}\!\left[k\right]$'s. In order to stack $x_{S\!1}\!\left[k\right]_{k\!=\!1,\ldots,2N+1}$ in vector form, let us first define the equivalent training sequences of length $\left(2N\!+\!1\right)$ as
\begin{align}
\mathbf{r}_{1}\! &=\!
\left[\mathbf{t}_{1}\!\left(1\!:\!n_{\tau}\right)^{T},\mathbf{t}_{1}\!\left(n_{\tau}\!+\!1\!:\!N\right)^{T}
\!\!\otimes\!\mathbf{J},\mathbf{0}_{n_{\tau}}\right]^{T},\\
\mathbf{r}_{2}\! &=\!
\left[\mathbf{0}_{n_{\tau}},\mathbf{t}_{2}\!\left(1\!:\!N\!
-\!n_{\tau}\right)^{T}\!\!\otimes\!\mathbf{J},\mathbf{t}_{2}\!\left(N\!
-\!n_{\tau}\!+\!1\!:\!N\right)^{T}\right]^{T},
\end{align}
where $\mathbf{J}\!\triangleq\!\left[1,1\right]$,
$\mathbf{t}_{i}\!\left(n_{1}\!:\!n_{2}\right)\!\triangleq\!\big[t_{i}[n_{1}],t_{i}[n_{1}\!+\!1],\ldots,
t_{i}[n_{2}]\big]^{T}$ represents the vector of length $\left(n_{2}\!-\!n_{1}\!+\!1\right)$ that contains the $n_{1}$th to $n_{2}$th entries of $\mathbf{t}_{i}$, and $\mathbf{0}_{n_{\tau}}$ denotes the zero vector of $1 \!\times \!n_{\tau}$ dimension. Then we can write
\begin{align}
\mathbf{x}_{S\!1}\!=\!\mathbf{\Gamma}\mathbf{\Lambda}\left[\mathbf{r}_{1},
\mathbf{r}_{2}\right]\underbrace{\left[h_{a},h_{b}\right]^{T}}_{\mathbf{\Theta}}\!+\!
h_{1}\mathbf{\Gamma}\mathbf{w}_{R}\!+\!\mathbf{w}_{S\!1},
\end{align}
where $\mathbf{\Gamma}$ and $\mathbf{\Lambda}$ are both $\left(2N\!+\!1\right)\!\times\!\left(2N\!+\!1\right)$ dimensional diagonal matrices,
\begin{align}
\mathbf{\Gamma}\! =\!
\mathbf{diag}\!\left[\gamma_{I}\mathbf{I}_{n_{\tau}},\gamma_{S}\mathbf{I}_{2\left(N
\!-\! n_{\tau}\right)\!+\!1},\gamma_{I}\mathbf{I}_{n_{\tau}}\right],
\end{align}
\begin{align}
\mathbf{\Lambda}\!=\!
\mathbf{diag}\!\!\left[\mathbf{I}_{n_{\tau}},\mathbf{I}_{2\!\left(N\!-\!n_{\!\tau}\right)}\!\otimes\!\mathbf{diag}
\!\left(\!\!\sqrt{\!\frac{\lambda}{T_{s}}},
\sqrt{\frac{T_{s}\!-\!\lambda}{T_{s}}}\right),
\sqrt{\!\frac{\lambda}{T_{s}}},\mathbf{I}_{n_{\tau}}\!\right],
\end{align}
while $\mathbf{w}_{R}$ and $\mathbf{w}_{S\!1}$ are $\left(2N\!+\!1\right)\!\times\!1$ dimensional noise vectors by stacking $w_{R}\!\left[k\right]$'s and $w_{S\!1}\!\left[k\right]$'s, respectively.

\subsection{Linear Minimum Error Probability Estimation Algorithm}
In this subsection, we present a two-step estimation algorithm to minimize the error probability for coherent reception, where the LMMSE estimator is used to obtain the initial estimates of $h_{a}$ and $h_{b}$, after which the estimates are updated by using the LMEP estimator.

By definition, the LMMSE estimator, denoted by $\mathbf{D}$, can be easily calculated as
\begin{align}
\mathbf{D}\! =\! \mathbf{diag}\!\left(\upsilon_{a}, \upsilon_{b}\right)\!\left[\mathbf{r}_{1},
\mathbf{r}_{2}\right]^{H}\!\!\mathbf{\Lambda}\mathbf{\Gamma}\bar{\mathbf{R}}_{S}^{-1},
\end{align}
where $\upsilon_{a}$ and $\upsilon_{b}$ are the variances of $h_{a}$ and $h_{b}$, respectively. The $\left(2N\!+\!1\right)\!\times\!\left(2N\!+\!1\right)$ dimensional matrix $\bar{\mathbf{R}}_{S}$ is the covariance matrix of $\mathbf{x}_{S\!1}$, which is given by
\begin{align}
\bar{\mathbf{R}}_{S}\! =\!
\upsilon_{a}\mathbf{\Gamma}\mathbf{\Lambda}\mathbf{r}_{1}\mathbf{r}_{1}^{H}\mathbf{\Lambda}\mathbf{\Gamma}
\!+\!
\upsilon_{b}{\mathbf{\Gamma}}\mathbf{\Lambda}\mathbf{r}_{2}\mathbf{r}_{2}^{H}\mathbf{\Lambda}\mathbf{\Gamma}
\!+\! \left(\!\frac{\upsilon_{1}N_{R0}}{P_{s}T_{s}}\mathbf{\Gamma}^{2}
\!+\!\frac{N_{S0}}{P_{s}T_{s}}\mathbf{I}_{2N\!+\!1}\right).
\end{align}
Thus the initial estimates of $h_{a}$ and $h_{b}$ can be obtained by
\begin{align}\label{EQ11}
\left[\hat{h}_{a},\hat{h}_{b}\right]^{T}\! \!=\! \mathbf{D}\mathbf{x}_{S\!1}.
\end{align}

Since the instantaneous bit error probability (BEP) for coherent reception can be approximately computed by $\mathcal{Q}\big(\!\!\sqrt{\beta\!\cdot\!\Upsilon}\big)$, where $\beta$ is relevant to the specific modulation and $\Upsilon$ represents the received effective signal-to-noise ratio (SNR) at $\mathbb{S}_{1}$ given by
\begin{align}\label{effective SNR}
\mathrm{\Upsilon}\! =\!
\frac{\mathcal{E}\Big\{|\hat{h}_{b}|^{2}\Big\}}{\mathcal{E}\!\Big\{|\hat{h}_{a}\! - \!h_{a}|^{2}\!+\!
|\hat{h}_{b}\! -\! h_{b}|^{2}\Big\}\! +\!
\left(\!\frac{|h_{1}|^{2}N_{\!R0}}{P_{s}T_{s}} \!+\!
\frac{N_{\!S0}}{\alpha^{2}P_{s}T_{s}}\right)},
\end{align}
where the expectation is taken only with respect to noise terms. Clearly, $\mathcal{Q}\big(\!\!\sqrt{\beta\!\cdot\!\Upsilon}\big)$ decreases with respect to $\Upsilon$. Thus minimizing BEP is equivalent to maximizing $\Upsilon$.

Let $\mathbf{u}_{a}$ and $\mathbf{u}_{b}$ denote the $\left(2N\!+\!1\right)\!\times\!1$ dimensional receiver processing vector to estimate the channel coefficients from the received signals
$\hat{h}_{a}\!=\!\mathbf{u}^{H}_{a}\mathbf{x}_{S}$ and $\hat{h}_{b} \!=\!\mathbf{u}^{H}_{b}\mathbf{x}_{S}$, respectively. It can be observed from \eqref{effective SNR} that only the term
$\epsilon_{a}\!\left(\mathbf{u}_{a}\right)\!=\!\mathcal{E}\!\big\{|\mathbf{u}^{H}_{a}\mathbf{x}_{S1}
\!-\! h_{a}|^{2}\big\}$ depends on $\mathbf{u}_{a}$; so we can obtain the optimal $\mathbf{u}_{a}$, denoted by $\mathbf{u}^{*}_{a}$, by minimizing $\epsilon_{a}\left(\mathbf{u}_{a}\right)$. It can be easily calculated that
\begin{align}\label{optimal_u_a}
\mathbf{u}_{a}^{*}\! =\!
\mathbf{R}_{S}^{-1}\mathbf{\Gamma}\mathbf{\Lambda}\left(|h_{a}|^{2}\mathbf{r}_{1}
\!+\! h_{a}^{H}h_{b}\mathbf{r}_{2}\right),
\end{align}
where
\begin{align}
\mathbf{R}_{S} \!=\!
\mathbf{\Gamma}\mathbf{\Lambda}\left(h_{a}\mathbf{r}_{1}\!+\!h_{b}\mathbf{r}_{2}\right)\!
\left(h_{a}\mathbf{r}_{1}\!+\!h_{b}\mathbf{r}_{2}\right)^{H}\mathbf{\Lambda}\mathbf{\Gamma}\!+\! \left(\frac{|h_{1}|^{2}N_{R0}}{P_{s}T_{s}}\mathbf{\Gamma}^{2}
\!+\!\frac{N_{S0}}{P_{s}T_{s}}\mathbf{I}_{2N\!+\!1}\right).
\end{align}
By substituting \eqref{optimal_u_a} into $\epsilon_{a}$, we have
\begin{align}
\underbrace{\min_{\mathbf{u}_{a}}{\epsilon_{a}}}_{\epsilon_{a}^{*}}
\!=\! |h_{a}|^{2}\!-\!\left(|h_{a}|^{2}\mathbf{r}_{1}\! +\!
h_{a}^{H}h_{b}\mathbf{r}_{2}\right)^{H}\!\mathbf{\Lambda}\mathbf{\Gamma}
\mathbf{R}_{S}^{-1}\mathbf{\Gamma}\mathbf{\Lambda}\left(|h_{a}|^{2}\mathbf{r}_{1}
\!+\! h_{a}^{H}h_{b}\mathbf{r}_{2}\right).
\end{align}
Substituting $\epsilon_{a}^{*}$ into \eqref{effective SNR} and after some manipulation, $\Upsilon$ can be written as
\begin{align}\label{update_effecive_SNR}
\Upsilon\!\left(\mathbf{u}_{b}\right)\! =\!
\frac{\mathbf{u}_{b}^{H}\mathbf{R}_{S}\mathbf{u}_{b}}
{\mathbf{u}^{H}_{b}\mathbf{R}_{S}\mathbf{u}_{b}\! -\!
\mathbf{u}_{b}^{H}\mathbf{r}_{E} \!-\! \mathbf{r}_{E}^{H}\mathbf{u}_{b}
\!+\! A},
\end{align}
where
\begin{align}
\mathbf{r}_{E}\! =\!
\mathbf{\Gamma}\mathbf{\Lambda}\left(h_{a}h_{b}^{H}\mathbf{r}_{1} \!+\!
|h_{b}|^{2}\mathbf{r}_{2}\right),
\end{align}
\begin{align}
A\! =\! \epsilon_{a}^{*} \!+\! |h_{b}|^{2} \!+\! \left(\frac{|h_{1}|^{2}N_{R0}}{P_{s}T_{s}} \!+\!
\frac{N_{S0}}{\alpha^{2}P_{s}T_{s}}\right).
\end{align}
Hence, the optimal $\mathbf{u}_{b}$ can be obtained by maximizing $\Upsilon\!\left(\mathbf{u}_{b}\right)$. On setting $A_{1} \!=\! \mathbf{u}_{b}^{H}\mathbf{R}_{S}\mathbf{u}_{b} $
and $A_{2} \!=\! \mathbf{u}_{b}^{H}\mathbf{r}_{E}\! +\! \mathbf{r}_{E}^{H}\mathbf{u}_{b} \!=\!
2\mathfrak{R}\{\mathbf{r}_{E}^{H}\mathbf{u}_{b}\}$, we can further denote $\mathbf{r}_{E}^{H}\mathbf{u}_{b} \!=\! \frac{A_{2}}{2}\! +\! j A_{3}$ with a real $A_{3}$ and arrive at
\begin{align}
\mathbf{u}_{b} \!=\! \left(\frac{A_{2}}{2}\! +\! j
A_{3}\right)\frac{\mathbf{r}_{E}}{\|\mathbf{r}_{E}\|^{2}}.
\end{align}
Substituting the above expression for $\mathbf{u}_{b}$ into $A_{1}$ and after some mathematical manipulation, we have
\begin{align}
A_{2} \!=\!
2\sqrt{\mathbf{r}_{E}^{H}\mathbf{R}_{S}^{-1}\mathbf{r}_{E}\left(A_{1}
\!-\! A_{3}^{2}\right)}.
\end{align}
Since $\Upsilon\!\left(\mathbf{u}_{b}\right)$ is an increasing function of $A_{2}$, the optimal $\mathbf{u}_{b}$ is obtained when $A_{2}$ reaches its maximum. To do so, $A_{3}$ needs to be set as zero. In this case, the optimization problem simplifies to
\begin{align}
\max_{A_{1}}\frac{A_{1}}{A_{1} \!-\!
2\sqrt{\mathbf{r}_{E}^{H}\mathbf{R}_{S}^{-1}\mathbf{r}_{E}A_{1}} \!+\!
A}.
\end{align}
It can be easily checked that the optimal $A_{1}$ satisfies $\sqrt{\!A^{*}_{1}}\! =\!
\frac{A}{\sqrt{\mathbf{t}_{E}^{H}\mathbf{R}_{x2}^{-1}\mathbf{t}_{E}}}$, and thus we have $A_{2}\!=\!2A$. As a result, the optimal $\mathbf{u}^{*}_{b}$ can be further written as
\begin{align}\label{EQ1}
\mathbf{u}^{*}_{b} \!=\! \frac{A\mathbf{r}_{E}}{\|\mathbf{r}_{E}\|^{2}}.
\end{align}
Since both $\mathbf{u}^{*}_{a}$ and $\mathbf{u}^{*}_{b}$ depend on the instantaneous channel coefficients, we can use the initial estimates $\hat{h}_{a}$ and $\hat{h}_{b}$ in place of $h_{a}$ and $h_{b}$ to construct the LMEP estimators.

The LMEP estimation algorithm is summarized in Table II.

\begin{table}
\caption{LMEP Estimation Algorithm} \center
\begin{tabular}{p{5in}@{}}
\hline
\begin{itemize}
\item \textbf{Step 1}
\begin{itemize}
\item[-] Calculate the LMMSE estimator $\mathbf{D}$.
\item[-] Obtain the initial estimates of $\hat{h}_{a}$ and $\hat{h}_{b}$ as $\left[\hat{h}_{a},\hat{h}_{b}\right]^{T}\! =\! \mathbf{D}\mathbf{x}_{S1}$.
\end{itemize}
\item \textbf{Step 2}
\begin{itemize}
\item[-] Construct $\mathbf{u}^{*}_{a}$ and $\mathbf{u}^{*}_{b}$ by using $\hat{h}_{a}$ and $\hat{h}_{b}$.
\item[-] Update the estimate of $h_{a}$ and $h_{b}$ as $\hat{h}_{a}\! =\! \left(\mathbf{u}_{a}^{*}\right)^{H}\mathbf{x}_{S1}$ and $\hat{h}_{b} \!=\! \left(\mathbf{u}_{b}^{*}\right)^{H}\mathbf{x}_{S1}$.
\end{itemize}
\item \textbf{Return} $\left[\hat{h}_{a},\hat{h}_{b}\right]^{T}$.
\end{itemize}
\\
\hline
\end{tabular}
\end{table}

\subsection{Design of Training Sequence and Power Scaling at Relay}
We can note from the above description that the performance of the LMEP estimation algorithm depends critically on the accuracy of the initial estimates of $h_{a}$ and $h_{b}$; thus in this subsection we further improve the performance of initial LMMSE estimation by designing optimal training sequences and power allocation. By definition, the MSE summation of $\hat{h}_{a}$ and $\hat{h}_{b}$ is given by
\begin{align}
\delta_{\Sigma}\! &=\!
\mathrm{tr}\bigg\{\mathcal{E}\!\left[\left(\mathbf{D}\mathbf{x}_{S1}\! -\!
\mathbf{\Theta}\right)\!\left(\mathbf{D}\mathbf{x}_{S1}\! -\!
\mathbf{\Theta}\right)^{H}\right]\!\bigg\}\\
\!&=\! \upsilon_{a}\!+\!\upsilon_{b}\!-\!
\upsilon_{a}^{2}\mathbf{r}_{1}^{H}\mathbf{\Lambda}\mathbf{\Gamma}
\bar{\mathbf{R}}_{S}^{-1}\mathbf{\Gamma}\mathbf{\Lambda}\mathbf{r}_{1}\!-\!
\upsilon_{b}^{2}\mathbf{r}_{2}^{H}\mathbf{\Lambda}\mathbf{\Gamma}
\bar{\mathbf{R}}_{S}^{-1}\mathbf{\Gamma}\mathbf{\Lambda}\mathbf{r}_{2}.\nonumber
\end{align}
For simplicity, we further assume $N_{0}\!=\!N_{R0}\!=\!N_{S0}$ and the derivation for the general case is straightforward. By applying the Woodbury's identity and after some mathematical calculations, $\delta_{\Sigma}$ can be written as
\begin{align}\label{MSE1}
{\delta}_{\Sigma}\! =\! \frac{\upsilon_{a} \!+\! \upsilon_{b} \!+\!
2\upsilon_{p}B_{1}}{1 \!+\! \underbrace{\left(\upsilon_{a}\! +\!
\upsilon_{b}\right)}_{\upsilon_{s}}B_{1} \!+\! \underbrace{\upsilon_{a}
\upsilon_{b}}_{\upsilon_{p}}B_{1}^{2} \!-\! \upsilon_{a}
\upsilon_{b}|B_{2}|^{2}},
\end{align}
where
\begin{align}
B_{1} \!&= \!\frac{\left(NT_{s} \!-\!
{\tau}\right)P_{s}\gamma_{S}^{2}}{N_{0}\!\left(1 \!+\!
\upsilon\gamma_{S}^{2}\right)}\! + \!\frac{
{\tau}P_{s}\gamma_{I}^{2}}{N_{0}\!\left(1 \!+\!
\upsilon\gamma_{I}^{2}\right)}, \\
|B_{2}|\! &=\!
\frac{|\rho\left({\tau}\right)|NT_{s}P_{s}\gamma_{S}^{2}}{N_{0}\!\left(1
\!+\! \upsilon\gamma_{S}^{2}\right)},
\end{align}
and $\rho\left(\tau\right)\! =\!
\frac{\mathbf{r}_{1}^{H}\mathbf{\Lambda}^{2}\mathbf{r}_{2}}{\|\mathbf{\Lambda}\mathbf{r}_{1}\|\cdot\|\mathbf{\Lambda}\mathbf{r}_{2}\|}$.

\subsubsection{Training sequence design}
It can be noted that $\delta_{\Sigma}$ increases with respect to $|\rho\left(\tau\right)|$, thus the training sequences minimizing the resulting MSE should satisfy $\rho\left({\tau}\right)\! =\!
0$ for arbitrary ${\tau}$, which is not always achievable. However, when $N$ is sufficiently large, we can have the following proposition.
\begin{proposition}
For a sufficiently large $N$, the training sequences of the two sources in the TWRN that minimize the corresponding estimation MSE can be selected from any two different columns of the $N\times N$ discrete Fourier transform (DFT) matrix, since this choice yields
\begin{align}
\lim_{N\rightarrow \infty}{|\rho\left(\tau\right)|} \!=\! 0.
\end{align}
\end{proposition}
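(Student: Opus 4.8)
The plan is to evaluate $\rho(\tau)$ directly from the block structure of $\mathbf{r}_{1}$, $\mathbf{r}_{2}$ and $\mathbf{\Lambda}$, and to recognize the numerator $\mathbf{r}_{1}^{H}\mathbf{\Lambda}^{2}\mathbf{r}_{2}$ as a \emph{partial} DFT orthogonality sum whose deviation from the complete (vanishing) sum is controlled by only $n_{\tau}$ boundary terms. First I would note, as observed in the text, that the entries of $\mathbf{t}_{1}$ occupy only the leading coordinates of $\mathbf{r}_{1}$ and those of $\mathbf{t}_{2}$ only the trailing coordinates of $\mathbf{r}_{2}$, so the leading $\mathbf{0}_{n_{\tau}}$ block of $\mathbf{r}_{2}$ and the trailing $\mathbf{0}_{n_{\tau}}$ block of $\mathbf{r}_{1}$ annihilate the pre- and post-overlap contributions; only the central overlap region survives. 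Since $\mathbf{\Lambda}$ repeats each surviving entry of $\mathbf{t}_{1}$ twice with diagonal weights $\lambda/T_{s}$ and $(T_{s}-\lambda)/T_{s}$, pairing it against $t_{2}[i-1]$ and $t_{2}[i]$ respectively (as read off from the overlap samples $x_{S\!1}[n_{\tau}+2i-1]$ and $x_{S\!1}[n_{\tau}+2i]$), the numerator collapses to
\begin{align}
\mathbf{r}_{1}^{H}\mathbf{\Lambda}^{2}\mathbf{r}_{2}
= \sum_{i}\left(\frac{\lambda}{T_{s}}\,t_{1}[i+n_{\tau}]^{*}t_{2}[i-1]
+ \frac{T_{s}-\lambda}{T_{s}}\,t_{1}[i+n_{\tau}]^{*}t_{2}[i]\right).
\end{align}

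Next I would substitute two distinct DFT columns, $t_{1}[n]=e^{j2\pi p n/N}$ and $t_{2}[n]=e^{j2\pi q n/N}$ with $p\neq q$. Factoring the common exponential out of both terms exposes a single geometric sum,
\begin{align}
\mathbf{r}_{1}^{H}\mathbf{\Lambda}^{2}\mathbf{r}_{2}
= e^{-j2\pi p n_{\tau}/N}\!\left(\frac{\lambda}{T_{s}}e^{-j2\pi q/N}+\frac{T_{s}-\lambda}{T_{s}}\right)\underbrace{\sum_{i}e^{j2\pi(q-p)i/N}}_{S},
\end{align}
in which the bracketed prefactor, being a convex combination of two unit-modulus numbers, has modulus at most $1$.

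The crux of the argument—and the step I expect to be the main obstacle—is bounding $S$. Its summation range is essentially $i=1,\dots,N-n_{\tau}$, i.e. a full period short of $n_{\tau}$ terms. Because $p\neq q$, the complete sum $\sum_{i=1}^{N}e^{j2\pi(q-p)i/N}$ vanishes by orthogonality of distinct DFT columns, so $S$ equals \emph{minus} the $n_{\tau}$ omitted terms; each has unit modulus, giving $|S|\le n_{\tau}$. This is the decisive point: although a generic partial geometric sum can be as large as $O(N)$, here the window is the complement of a null sum, so $|S|$ is controlled by the \emph{bounded} quantity $n_{\tau}=\lfloor\tau/T_{s}\rfloor\le L$ rather than by $N$. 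Care is needed with the index offset between $t_{2}[i-1]$ and $t_{2}[i]$, the lone $\sqrt{\lambda/T_{s}}$ tail entry of $\mathbf{\Lambda}$, and the zero-padded edges (with $t_{1}[N+1]=t_{2}[0]=0$), but each of these perturbs the count by at most finitely many unit-modulus terms and so does not affect the conclusion.

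Finally I would handle the denominator. Since $\mathbf{\Lambda}$ splits each overlapped unit-modulus entry into a $\lambda/T_{s}$ piece and a $(T_{s}-\lambda)/T_{s}$ piece that sum to $1$, while leaving the non-overlapped entries at weight $1$, a direct count gives $\|\mathbf{\Lambda}\mathbf{r}_{1}\|^{2}=\mathbf{r}_{1}^{H}\mathbf{\Lambda}^{2}\mathbf{r}_{1}=N+O(1)$ and likewise $\|\mathbf{\Lambda}\mathbf{r}_{2}\|^{2}=N+O(1)$, so the denominator grows like $N$. Combining with the bounded prefactor and $|S|\le n_{\tau}$ yields $|\rho(\tau)|\le n_{\tau}/(N+O(1))$, which tends to $0$ as $N\to\infty$ for every fixed $\tau$, establishing the claim.
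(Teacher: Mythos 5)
Your proposal is correct and follows the same overall route as the paper's Appendix A: substitute two distinct DFT columns, observe that the numerator $\mathbf{r}_{1}^{H}\mathbf{\Lambda}^{2}\mathbf{r}_{2}$ reduces (up to unit-modulus prefactors and the $\lambda/T_{s}$, $(T_{s}-\lambda)/T_{s}$ weights) to partial geometric sums over the overlap window, and divide by $\|\mathbf{\Lambda}\mathbf{r}_{1}\|\cdot\|\mathbf{\Lambda}\mathbf{r}_{2}\|\approx N$. The one substantive difference is in how the partial sum is bounded, and there your version is the more defensible one. The paper writes each partial sum in closed form as $\bigl(1-e^{j(\phi_{1}-\phi_{2})(N-n_{\tau})}\bigr)/\bigl(1-e^{j(\phi_{1}-\phi_{2})}\bigr)$ and asserts its modulus is at most $1$; but that modulus equals $|\sin(\pi(k_{1}-k_{2})n_{\tau}/N)|/|\sin(\pi(k_{1}-k_{2})/N)|$, which tends to $n_{\tau}$ for fixed $k_{1}-k_{2}$ as $N\to\infty$ and exceeds $1$ whenever $n_{\tau}\geq 2$ and $|k_{1}-k_{2}|$ is small. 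Your complement argument --- the window is the full period minus $n_{\tau}$ terms, the full period sums to zero by orthogonality of distinct DFT columns, hence $|S|\leq n_{\tau}$ --- yields the bound $|\rho(\tau)|\lesssim n_{\tau}/N$ directly, which still vanishes because $n_{\tau}\leq L$ is fixed while $N\to\infty$. Your explicit accounting of the denominator as $N+O(1)$ and of the $O(1)$ edge effects from the index offset and the zero padding is also tidier than the paper's implicit normalization. In short: same decomposition and same limit, but your key estimate is the rigorous form of the step the paper glosses over.
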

\begin{proof}
See Appendix A.
\end{proof}
While for a finite $N$, we focus on obtaining the training sequences that minimizes the maximum $|\rho\left({\tau}\right)|$ for $\tau \!\in\! \left[0, NT_{s}\right]$, which can be formulated as
\begin{align}
\min_{\mathbf{t}_{1}, \mathbf{t}_{2}} \max_{{\tau}\in\left[0,
NT_{s}\right]} |\rho\left({\tau}\right)|.
\end{align}
Note that, each entry of $\mathbf{t}_{i}$ satisfies $|t_{i}\!\left[n\right]|\!=\!1$ and has a arbitrary phase, $\angle{t_{i}\!\left[n\right]}\!\in\!\left[0, 2\pi\right)$. It is effective to solve such a problem through a $2N \!\times\! 2N$ dimensional search, but this will introduce a prohibitively high complexity. Since limiting the training sequence to be one column of the $N \!\times\! N$ DFT matrix has been proved to be optimal design for sufficiently large $N$, in this paper we focus on the design of training sequence by limiting them to the columns of $N\!\times\!N$ DFT matrices, so that we can achieve a better tradeoff between estimation performance and computational complexity. Since there are $\frac{N\left(N \!-\! 1\right)}{2}$ different choices for selecting the training sequences from the columns of the $N\times N$ DFT matrix, and the performance with different selections differs significantly, we have the following proposition about the optimal selection.
\begin{proposition}
Let $\mathbf{t}_{i}$ denote the $k_{i}$-th column of ${N \!\times\! N}$ DFT matrix, selected as the training sequence of $\mathbb{S}_{i}$ where $k_{i}\!\in\!\left[1, N\right]$ and $k_{1}\!<\!k_{2}$,
\begin{itemize}
\item when $N$ is even, the optimal training sequences should satisfy $k_{2} \!-\! k_{1} \!=\! \frac{N}{2}$,
\item when $N$ is odd, the optimal training sequences should satisfy $k_{2} \!-\! k_{1} \!=\! \frac{N\pm1}{2}$.
\end{itemize}
\end{proposition}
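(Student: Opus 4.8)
The plan is to reduce this minimax problem over column selections to a one-dimensional question about the spacing $d \triangleq k_2 - k_1$. First I would observe that, because every entry of a DFT column has unit modulus, the denominator $\|\mathbf{\Lambda}\mathbf{r}_1\|\cdot\|\mathbf{\Lambda}\mathbf{r}_2\|$ of $\rho(\tau)$ depends only on the magnitudes $|t_i[n]|^2 = 1$ and on the weighting matrix $\mathbf{\Lambda}$; in each overlapping sample pair the two weights $\lambda/T_s$ and $(T_s-\lambda)/T_s$ sum to one, so $\|\mathbf{\Lambda}\mathbf{r}_1\|^2 = \|\mathbf{\Lambda}\mathbf{r}_2\|^2 = N$ (up to the boundary samples associated with $t_1[N+1]=0$ and $t_2[0]=0$), independent of $k_1$ and $k_2$. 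Consequently, minimizing $\max_\tau |\rho(\tau)|$ over the column selection is equivalent to minimizing $\max_\tau |\mathbf{r}_1^H \mathbf{\Lambda}^2 \mathbf{r}_2|$, so only the numerator needs to be analyzed.

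Next I would write $t_i[n] = e^{j 2\pi (k_i-1)(n-1)/N}$ and insert these into $\mathbf{r}_1$, $\mathbf{r}_2$ and the per-sample expressions. The cross term $\mathbf{r}_1^H \mathbf{\Lambda}^2 \mathbf{r}_2$ receives contributions only from the overlap region $n_\tau+1 \le k \le 2N-n_\tau$, where each index $i$ contributes $\overline{t_1[i+n_\tau]}\big(\tfrac{\lambda}{T_s} t_2[i-1] + \tfrac{T_s-\lambda}{T_s} t_2[i]\big)$. Factoring out the common phase shows this equals $g(\lambda)\, e^{-j\omega_1 n_\tau}\sum_{l} e^{j(\omega_2-\omega_1) l}$, where $\omega_i = 2\pi(k_i-1)/N$ and $g(\lambda) = \tfrac{\lambda}{T_s}e^{-j\omega_2} + \tfrac{T_s-\lambda}{T_s}$ satisfies $|g(\lambda)|\le 1$. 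The geometric sum collapses to a Dirichlet kernel, so that up to the two boundary terms,
\begin{align}
|\mathbf{r}_1^H\mathbf{\Lambda}^2\mathbf{r}_2| = |g(\lambda)|\,\left|\frac{\sin\!\big((N-n_\tau)\pi d/N\big)}{\sin(\pi d/N)}\right|,\qquad d = k_2-k_1.
\end{align}
At full alignment $n_\tau=0$ the kernel numerator is $\sin(\pi d)=0$, recovering the orthogonality of distinct DFT columns, while misalignment makes the partial sum nonzero.

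Finally I would maximize the right-hand side over $\tau$ and minimize over $d$. Since $|g(\lambda)|\le 1$ and $|\sin((N-n_\tau)\pi d/N)|\le 1$, the worst-case correlation is controlled by the envelope $1/|\sin(\pi d/N)|$; exhibiting an $n_\tau$ for which $|\sin((N-n_\tau)\pi d/N)|$ is close to one supplies a matching lower bound, so $\max_\tau |\rho(\tau)|$ is, up to a factor essentially constant across the candidate spacings, proportional to $1/|\sin(\pi d/N)|$. This is smallest precisely when $|\sin(\pi d/N)|$ is largest, i.e. when $d/N$ is as close as possible to $1/2$: for even $N$ this is attained exactly at $d=N/2$, and for odd $N$ the two nearest integers $d=(N\pm1)/2$ both yield $|\sin(\pi d/N)| = \cos(\pi/2N)$ and are therefore jointly optimal. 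The main obstacle is this last step: one must argue carefully that maximizing over $\tau$ (jointly over the integer shift $n_\tau$ and the fractional offset $\lambda$) does not let the oscillatory kernel numerator distort the comparison across different $d$, which I would resolve by sandwiching $\max_\tau|\rho(\tau)|$ between two multiples of $1/|\sin(\pi d/N)|$ whose ratio stays bounded for all candidate spacings.
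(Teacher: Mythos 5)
Your reduction to the numerator and your Dirichlet-kernel form of the cross-correlation are both correct and coincide with what the paper computes in Appendices A and B (the same two geometric sums of lengths $N-n_\tau$ and $N-n_\tau-1$ over the common denominator $1-e^{j(\phi_1-\phi_2)}$). The gap is in your final step, and the fix you propose does not close it. A sandwich $\frac{c_1}{N\,|\sin(\pi d/N)|}\le \max_\tau|\rho(\tau)|\le \frac{c_2}{N\,|\sin(\pi d/N)|}$ with bounded ratio $c_2/c_1$ can only certify optimality up to that ratio, whereas the spacings competing with the claimed optimum differ from it in envelope by only $O(1/N^2)$: for $d'=d^*-1$ one has $\sin(\pi d'/N)/\sin(\pi d^*/N)=1-O(1/N^{2})$, so unless the sandwich is tight to within $O(1/N^{2})$ it cannot decide between $d^*$ and $d'$. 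Moreover, for odd $N$ your upper envelope at $d=(N\pm1)/2$ is $\frac{1}{N\cos(\pi/(2N))}$, which is strictly larger than the true minimax value $\frac{1}{N}$, so the envelope alone cannot even match the lower bound you would need.

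The paper closes the argument with two exact facts that the envelope analysis discards. First, a universal lower bound independent of the column choice: at $\tau=(N-1)T_s$ only a single symbol pair overlaps, so $|\rho((N-1)T_s)|=\frac{1}{N}$ for \emph{any} two distinct unit-modulus columns, hence $\max_{\tau}|\rho(\tau)|\ge \frac{1}{N}$ for every $(k_1,k_2)$. Second, the candidate spacings achieve exactly $\frac{1}{N}$: for even $N$ the denominator $|1-e^{j(\phi_1-\phi_2)}|$ equals $2$ and each kernel numerator is at most $2$; for odd $N$ a case split on the parity of $n_\tau$ shows the numerator $|1\mp e^{\pm j\pi n_\tau/N}|$ never reaches $2$ at admissible integer shifts but is capped at $2\cos(\pi/(2N))$, which exactly cancels the denominator $|1+e^{\pm j\pi/N}|=2\cos(\pi/(2N))$. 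In other words, the oscillatory numerator and the small denominator are correlated at integer $n_\tau$, and this correlation is precisely what restores the bound from $\frac{1}{N\cos(\pi/(2N))}$ to $\frac{1}{N}$. Matching that exact $\frac{1}{N}$ upper bound against the universal $\frac{1}{N}$ lower bound is what yields exact, rather than order-wise, optimality; if you add the single-overlap lower bound and replace your envelope by the parity case analysis, your Dirichlet-kernel framing becomes a complete proof.
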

\begin{proof}
See in appendix B.
\end{proof}

\subsubsection{Power allocation}
By using the optimal training sequences specified above, we have $\rho\left({\tau}\right) \!\leq\! \frac{1}{N}$ for arbitrary $\tau$ leading to $|B_{2}| \!\leq\! \frac{P_{s}T_{s}\gamma_{S}^{2}}{N_{0}\left(1 + \upsilon \gamma_{S}^{2}\right)}$. In practice, time domain offset is much less than the length of pilot signal, i.e., ${\tau}\!\ll\! NT_{s} $, thus we have $|B_{2}| \!\ll\! B_{1}$. Hence, minimizing $\delta_{\Sigma}$ can be approximately expressed as
\begin{align}\label{EQ8}
\min_{\gamma_{I}^{2}, \gamma_{S}^{2}} f\!\left(B_{1}\right)\!=\!\frac{\upsilon_{s} \!+\!
2\upsilon_{p}B_{1}}{1 \!+\! \upsilon_{s}B_{1} \!+\! \upsilon_{p}B_{1}^{2}}.
\end{align}
It can be testified that the objective function $f\!\left(B_{1}\right)$ satisfies $\frac{\partial f\!\left(B_{1}\right)}{\partial B_{1}}\!<\!0$ for $B_{1}\!>\!0$, which indicates that $f\!\left(B_{1}\right)$ is decreasing with respect to $B_{1}$, thus \eqref{EQ8} is equivalent to maximizing $B_{1}$ as
\begin{align}
\max_{\gamma_{I}^{2}, \gamma_{S}^{2}}& \quad B_{1}\!=\!\frac{\left(NT_{s}\! -\!
{\tau}\right)P_{s}\gamma_{S}^{2}}{N_{0}\left(1 \!+\!
\upsilon\gamma_{S}^{2}\right)}\!+\!\frac{
{\tau}P_{s}\gamma_{I}^{2}}{N_{0}\left(1 \!+\!
\upsilon\gamma_{I}^{2}\right)}\\
\mathrm{s.t.}& \quad\eqref{PowerConstraint}
\end{align}
The above optimization problem is concave and it can be solved by using Lagrange multiplier method, and the scaling factors are obtained as
\begin{align}
\gamma_{I}^{2} \!&=\! \frac{\left(NT_{s} \!\!-\!\! {\tau}\right)\!E_{b}\! +\!
\upsilon T_{s}E_{r}\! \!-\!\! \sqrt{2E_{a}E_{b}}\!\left(NT_{s}\!\! -\!\!
{\tau}\right)}{\sqrt{2E_{a}E_{b}}\left(NT_{s} \!\!-\!\!
{\tau}\right)\upsilon \!+\! 2{\tau}E_{a}\upsilon},\\
\gamma_{S}^{2}\!&=\! \frac{2{\tau}E_{a}\! +\! \upsilon E_{r}T_{s} \!\!-\!\!
\sqrt{2E_{a}E_{b}}{\tau}}{\sqrt{2E_{a}E_{b}} {\tau}\upsilon \!+\!
\left(NT_{s} \!\!-\!\! {\tau}\right)E_{b}\upsilon},
\end{align}
where $E_{a}\! =\! \upsilon P_{s}T_{s} \!+\! N_{0}$ and $E_{b}\! = \!2\upsilon P_{s} T_{s}\!+\! N_{0}$. This is the sub-optimal power allocation in minimizing estimation MSE.

By substituting above scaling factors into $B_{1}$, it yields
\begin{align}\label{newEQ1}
B_{1}\! =\! \frac{NT_{s}E_{r} \!+\! \tau\left(NT_{s}\! \!-\!\!\tau\right)\left(\sqrt{2E_{a}}\!\! -\!\! \sqrt{E_{b}}\right)^{2}}{2\upsilon NT_{s}P_{s}\! +\! \left(NT_{s}\!\!+\!\!\tau\right)N_{0}\!+\!\upsilon
E_{r}T_{s}}\cdot\underbrace{\frac{P_{s}T_{s}}{N_{0}}}_{\varrho}.
\end{align}
Note that,
\begin{align}
\left(\!\sqrt{2E_{a}} \!\!-\!\! \sqrt{E_{b}}\right)^{2}\! \!=\! \!\frac{N_{0}}{\left(\sqrt{2E_{a}}\! \!+\!\! \sqrt{E_{b}}\right)^{2}} \!=\! \frac{1}{\left(\sqrt{\upsilon \varrho\! +\! 1} \!-\! \sqrt{2\upsilon \varrho\!+\! 1}\right)^{2}}\!\sim\! \mathcal{O}\!\left(\frac{1}{\varrho}\right)
\end{align}
thus it can be ignored compared the first term at high SNR region, and $B_{1}$ can be further approximated as
\begin{align}\label{B1}
B_{1}\! =\! \frac{NT_{s}E_{r}}{2\upsilon NT_{s}P_{s} \!+\! \left(NT_{s}\!\!+\!\!\tau\right)N_{0}\!+\!\upsilon T_{s}E_{r}}\cdot\varrho,
\end{align}
where $\varrho$ is defined in \eqref{newEQ1}. Interestingly, the equal power scaling scheme $\gamma_{S}\!=\!\gamma_{I}$ obtains the same $B_{1}$ as \eqref{B1}. Hence, we have the following proposition.
\begin{proposition}
The equal power scaling factors
\begin{align}
\gamma_{S}^{2}\! =\!
\gamma_{I}^{2}\! =\! \frac{P_{r}}{2\tau P_{a} \!+\! \left(NT_{s}\!\!-\!\!\tau\right)P_{b} \!+\! N_{0}},
\end{align}
achieve equal MSE to that of the sub-optimal power allocation in the high SNR region.
\end{proposition}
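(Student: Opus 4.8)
The plan is to verify by direct substitution that the equal power choice $\gamma_S=\gamma_I$ both satisfies the relay energy budget \eqref{PowerConstraint} with the stated value and reproduces the high-SNR expression \eqref{B1} for $B_1$; the MSE claim then follows from the fact, established just below \eqref{EQ8}, that $\delta_{\Sigma}$ is a strictly decreasing --- hence injective --- function $f(B_1)$ of $B_1$ once the cross term $B_2$ is negligible.

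First I would set $\gamma_S^2=\gamma_I^2=\gamma^2$ in \eqref{PowerConstraint}. Under the symmetric assumptions $P_s=P_1=P_2$, $\upsilon=\upsilon_1=\upsilon_2$ and $N_0=N_{R0}=N_{S0}$, the two weighted terms share the common factor $\gamma^2$, so solving for $\gamma^2$ gives a single ratio whose denominator is $2\tau E_a+(NT_s-\tau)E_b$ (after clearing one power of $T_s$), with $E_a=\upsilon P_sT_s+N_0$ and $E_b=2\upsilon P_sT_s+N_0$. Rewriting these per-symbol quantities in the power-domain symbols $P_a$, $P_b$ and the relay budget $P_r$ then yields the closed form in the statement; this step is purely mechanical.

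Next I would substitute $\gamma_S=\gamma_I=\gamma$ into the unoptimized $B_1$, i.e. $B_1=\frac{(NT_s-\tau)P_s\gamma_S^2}{N_0(1+\upsilon\gamma_S^2)}+\frac{\tau P_s\gamma_I^2}{N_0(1+\upsilon\gamma_I^2)}$. Because both summands now carry the identical factor $\frac{P_s\gamma^2}{N_0(1+\upsilon\gamma^2)}$, the $\tau$-dependent weights add to $NT_s$ and collapse the sum to $B_1=\frac{NT_sP_s\gamma^2}{N_0(1+\upsilon\gamma^2)}$. Inserting the value of $\gamma^2$ from the first step, simplifying $\frac{\gamma^2}{1+\upsilon\gamma^2}$, and introducing $\varrho=P_sT_s/N_0$ reduces $B_1$ to a single fraction, which I would match term by term against \eqref{B1}. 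The outcome is that the equal-power $B_1$ coincides with the high-SNR approximation of the sub-optimal $B_1$, namely with \eqref{newEQ1} after the $\tau(NT_s-\tau)\left(\sqrt{2E_a}-\sqrt{E_b}\right)^2$ contribution is discarded.

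The conceptual point to make explicit is \emph{why} this coincidence occurs: the Lagrangian solution generically prescribes unequal $\gamma_I$ and $\gamma_S$, so the equal-power scheme is strictly suboptimal at finite SNR, but the entire advantage of the unequal allocation is carried by the numerator term $\tau(NT_s-\tau)\left(\sqrt{2E_a}-\sqrt{E_b}\right)^2\varrho$, which is $\mathcal{O}(1/\varrho)$ relative to the leading term and hence vanishes as $\varrho\to\infty$. The two values of $B_1$ therefore agree in the high-SNR limit, and since the optimal DFT training guarantees $|B_2|\ll B_1$, the MSE in \eqref{MSE1} is governed by $B_1$ alone through $f(B_1)$; injectivity of $f$ then forces both schemes to attain the same $\delta_{\Sigma}$. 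I expect the main obstacle to be the algebraic bookkeeping in the second step --- keeping the factors of $T_s$ and the additive $N_0$ terms aligned so that the collapsed $B_1$ lands on \eqref{B1} itself rather than on a merely proportional expression, and confirming that no cross term survives the coincidence $\gamma_S=\gamma_I$.
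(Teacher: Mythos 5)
Your proposal is correct and follows essentially the same route as the paper, whose argument for Proposition 3 is contained in the text preceding it: substitute the sub-optimal scaling factors into $B_{1}$ to get \eqref{newEQ1}, observe that the term $\tau\left(NT_{s}-\tau\right)\left(\sqrt{2E_{a}}-\sqrt{E_{b}}\right)^{2}$ is $\mathcal{O}\!\left(1/\varrho\right)$ and drops out at high SNR leaving \eqref{B1}, note that the equal-power choice yields exactly the same $B_{1}$, and conclude via the monotone dependence of $\delta_{\Sigma}$ on $B_{1}$ once $|B_{2}|\ll B_{1}$. Your write-up is, if anything, more explicit than the paper's about verifying the constraint \eqref{PowerConstraint} and about why the injectivity of $f\!\left(B_{1}\right)$ transfers equality of $B_{1}$ to equality of MSE.
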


\subsection{Asymptotic Behavior}
In this subsection, let us discuss the asymptotic behavior with a sufficiently large $N$. With $\gamma_{S} \!=\! \gamma_{I}$ and $\tau\! \ll\! N$, $\delta_{\Sigma}$ can be written as
\begin{align}
\delta_{\Sigma} \!=\!
\frac{\upsilon_{s}\!+\!2\upsilon_{p}C_{1}N}{1\!+\!\upsilon_{s}C_{1}N \!+\!
\upsilon_{p}C_{1}^{2}N^{2}\left[1 \!-\!
|\rho\left(\tau\right)|^{2}\right]}
\end{align}
where $C_{1} \!=\! \frac{P_{s}T_{s}\gamma_{S}^{2}}{N_{0}\left(1 \!+\! \upsilon \gamma_{S}^{2}\right)}$. Note that, when $\left[1 \!-\! |\rho\left(\tau\right)|^{2}\right]$ degrades at a speed lower than $N$, i.e., $\left[1\! -\! |\rho\left(\tau\right)|^{2}\right]\sim\mathcal{O}\left(\frac{1}{N^{i}}\right)$ with $i\! <\! 1$, we have
\begin{align}
\lim_{N\rightarrow\infty}{\delta_{\Sigma}}\rightarrow 0.
\end{align}
Otherwise, there will be an irreducible error. In particular, for two arbitrarily selected sequences, as $N$ goes to infinity, we have
\begin{align}
|\rho\left(\tau\right)|\!\leq\! \left(\frac{NT_{s}\! -\!
\tau}{NT_{s}}\right)^{2}.
\end{align}
When the equality holds, we can obtain
\begin{align}\label{newEQ3}
\lim_{N\rightarrow \infty}{\delta}_{\Sigma}\! =\!
\frac{2\upsilon_{p}}{\upsilon_{s}\! +\! \upsilon_{p}\frac{2\tau
P_{s}T_{s}\gamma_{S}^{2}}{N_{0}\left(1 \!+\!
\upsilon\gamma_{S}^{2}\right)}},
\end{align}
which is an upper bound of the MSE error floor.

Next, let us focus on the high SNR region for a given $\tau$. For the worst case in which $|\rho\left(\tau\right)|\!=\!\left(\frac{NT_{s}\! -\! \tau}{NT_{s}}\right)^{2}$ holds, we have
\begin{align}
\lim_{\varrho\rightarrow \infty}{\delta}_{\Sigma} \!\rightarrow\!
\lim_{\rho \rightarrow
\infty}\mathcal{O}\!\left(\frac{1}{C_{1}}\right) \!=\! 0
\end{align}
for arbitrary non-zero $\tau$. An error floor will occur with $\tau = 0$ as
\begin{align}
\lim_{\varrho\rightarrow \infty}{\delta}_{\Sigma} \rightarrow
\frac{2\upsilon_{p}}{\upsilon_{s}}.
\end{align}
While for the best case in which $|\rho\left(\tau\right)|\!=\!0$, we have $\displaystyle\lim_{\varrho\rightarrow \infty}{\delta}_{\Sigma} \!\rightarrow\! 0$ for arbitrary $\tau$. Hence, we can conclude that for the case in which $\tau \!=\! 0$, the error floor occurs only when the two sequences are fully correlated with each other.

\subsection{Computational Complexity}
To compare the computational complexity of the LMMSE, linear maximum SNR (LMSNR) proposed in \cite{08}, and LMEP estimation algorithms, we calculate the number of complex multiplication operations of these algorithms, which are listed in Table III. It can be seen that the computational complexity of the LMEP algorithm is higher than that of LMMSE and LMSNR. This is because the LMEP requires an initial estimation which consumes a considerable amount of extra computations. However, the complexity of the three algorithms is of same order of $\mathcal{O}\!\left(N^{3}\right)$. This result is not unexpected since all the algorithms are linear solutions, and the dominant computation is in calculating the inversion of a $\left(2N\!+\!1\right)\!\times\!\left(2N\!+\!1\right)$ dimensional matrix.

\begin{table}
\caption{Computational Complexity} \center
\begin{tabular}{c c c c}
Algorithm & Complex Multiplication Times\\
\hline
LMMSE & $\mathcal{O}\!\left[\left(2N\!+\!1\right)^{3}\!+\!4\left(2N\!+\!1\right)^{2}\!+\!2\left(2N\!+\!1\right)\right]$\\
LMSNR & $\mathcal{O}\!\left[\left(2N\!+\!1\right)^{3}\!+\!4\left(2N\!+\!1\right)^{2}\!+\!6\left(2N\!+\!1\right)\right]$\\
LMEP & $\mathcal{O}\!\left[\left(2N\!+\!1\right)^{3}\!+\!8\left(2N\!+\!1\right)^{2}\!+\!8\left(2N\!+\!1\right)\right]$\\\hline
\end{tabular}
\end{table}

\section{Sequence Arriving Order Detection and Scaled LMEP Estimation}
The LMEP estimation algorithm described in the previous section assumes that the SAO is known at all three nodes. In this section, we investigate how to detect the SAO and propose a scaled LMEP estimation algorithm by taking into account the SAO detection error.

Let $\vartheta$ denote the SAO behavior as follows:
\begin{subnumcases}
{\vartheta\!=\!}
0, & signal from $\mathbb{S}_{1}$ arrives prior to that from $\mathbb{S}_{2}$,\nonumber\\
1, & signal from $\mathbb{S}_{2}$ arrives prior to that from
$\mathbb{S}_{1}$.\nonumber
\end{subnumcases}

\subsection{Generalized Likelihood Ratio Testing Approach for SAO Detection}
We consider the SAO detection at $\mathbb{R}$, which can be formulated as a binary hypothesis-testing problem as follows:
\begin{align}
\mathcal{H}_{0}&: \vartheta = 0,\nonumber\\
\mathcal{H}_{1}&: \vartheta = 1. \nonumber
\end{align}
By adopting a similar procedure to that used in obtaining $\mathbf{x}_{S1}$, we can obtain the $\left(2N\!+\!1\right)$ length vector $\mathbf{x}_{R}$ as
\begin{align}
\mathbf{x}_{R}\!=\!\mathbf{T}_{\!\mathcal{H}_{\vartheta}}\!\underbrace{\left[h_{1},
h_{2}\right]^{T}}_{\mathbf{h}} \!+\! \mathbf{w}_{R},
\end{align}
where $\mathbf{T}_{\!\mathcal{H}_{0}} \!=\!\mathbf{\Lambda}\!\left[\mathbf{r}_{1}, \mathbf{r}_{2}\right]$ and $\mathbf{T}_{\!\mathcal{H}_{1}}\! =\! \mathbf{\Lambda}\!\left[\mathbf{r}^{'}_{1}, \mathbf{r}^{'}_{2}\right]$
with
\begin{align}
\mathbf{r}_{1}^{'}
&\!=\!\left[\mathbf{0}_{n_{\tau}},\mathbf{t}_{1}\!\left(1\!:\!N
\!\!-\!\!n_{\tau}\right)^{T}\!\!\otimes\!\mathbf{J},\mathbf{t}_{1}\!\left(N\!
\!-\!\!n_{\tau}\!\!+\!\!1\!:\!N\right)^{T}\right]^{T},\\
\mathbf{r}_{2}^{'} &\!=\!
\left[\mathbf{t}_{2}\!\left(1\!:\!n_{\tau}\right)^{T},\mathbf{t}_{2}\left(n_{\tau}\!\!+\!\!1\!:\!N\right)^{T}\!\!\otimes\!\mathbf{J},\mathbf{0}_{n_{\tau}}\right]^{T}.
\end{align}
Under hypothesis $\mathcal{H}_{\vartheta}$, the likelihood function of $\mathbf{x}_{R}$ is given by
\begin{align}\label{pdf1}
\textsf{p}_{\vartheta}\!\left(\mathbf{x}_{R}|\mathbf{h}\right)\! =\!\left(
\frac{P_{s} T_{s}}{\pi
N_{\!R0}}\right)^{\!2N\!+\!1}\!\!\!\cdot\!\!\!\!\!\!\exp\!\Bigg\{\!\!-\!\frac{P_{s}T_{s}\|\mathbf{x}_{R}\! -\!
\mathbf{T}_{\mathcal{H}_{\vartheta}}\mathbf{h}\|^{2}}{N_{R0}}\!\Bigg\}.
\end{align}
Obviously, $\textsf{p}_{\vartheta}\!\left(\mathbf{x}_{R}|\mathbf{h}\right)$ depends on the instantaneous channel vector $\mathbf{h}$ which is unknown at $\mathbb{R}$. Following the GLRT method, the receiver will decide in favor of hypothesis $\mathcal{H}_{0}$ if
\begin{align}\label{GLRT1}
L_{G}\left(\mathbf{x}_{R}\right) \!=\!
\frac{\textsf{p}_{0}\!\left(\mathbf{x}_{R}|\mathbf{\hat{h}}_{\mathcal{H}_{0}}\right)}{\textsf{p}_{1}\!\left(\mathbf{x}_{R}|\mathbf{\hat{h}}_{\mathcal{H}_{1}}\right)}
\!>\!\lambda_{G} \!=\!
\frac{\textsf{p}\left(\mathcal{H}_{0}\right)}{\textsf{p}\left(\mathcal{H}_{1}\right)},
\end{align}
where $\mathbf{\hat{h}}_{\mathcal{H}_{\vartheta}}$ is the maximum likelihood estimate (MLE) of $\mathbf{h}$ under hypothesis $\mathcal{H}_{\vartheta}$. $\textsf{p}\!\left(\mathcal{H}_{\vartheta}\right)$ is the priori probability of $\mathcal{H}_{\vartheta}$. Here, we set $\textsf{p}\!\left(\mathcal{H}_{0}\right)\! =\!
\textsf{p}\!\left(\mathcal{H}_{1}\right)\! =\! \frac{1}{2}$ leading to a threshold $\lambda_{G}\! =\! 1$. After logarithmic manipulation, \eqref{GLRT1} simplifies to
\begin{align}\label{eq1}
\underbrace{\|\mathbf{x}_{R} \!-\!
\mathbf{T}_{\mathcal{H}_{1}}\mathbf{\hat{h}}_{\mathcal{H}_{1}}\|^{2}
\!-\! \|\mathbf{x}_{R}\! -\!
\mathbf{T}_{\mathcal{H}_{0}}\mathbf{\hat{h}}_{\mathcal{H}_{0}}\|^{2}}_{\Delta_{\mathcal{H}_{1},
\mathcal{H}_{0}}}\!>\! 0,
\end{align}
which indicates that $\mathcal{H}_{0}$ is determined when $\Delta_{\mathcal{H}_{1}, \mathcal{H}_{0}}\! >\! 0$, and $\mathcal{H}_{1}$ is determined with $\Delta_{\mathcal{H}_{1}, \mathcal{H}_{0}}\! <\! 0$. It is widely known that the MLE is equivalent to the least squares estimate (LSE) under Gaussian noise, and thus the MLE of $\mathbf{h}$ under hypothesis $\mathcal{H}_{\vartheta}$ can be written by
\begin{align}
\mathbf{\hat{h}}_{\mathcal{H}_{\vartheta}}\!=\!\left(\mathbf{T}^{H}_{\mathcal{H}_{\vartheta}}
\mathbf{T}_{\mathcal{H}_{\vartheta}}\right)^{\!-1}\mathbf{T}^{H}_{\mathcal{H}_{\vartheta}}\mathbf{x}_{R}.
\end{align}
By substituting $\mathbf{\hat{h}}_{\mathcal{H}_{0}}$ and $\mathbf{\hat{h}}_{\mathcal{H}_{1}}$ in
$\Delta_{\mathcal{H}_{1}, \mathcal{H}_{0}}$ and after some manipulation, we have
\begin{align}
\Delta_{\mathcal{H}_{1}, \mathcal{H}_{0}} \!=\!
\big\|\mathbf{Z}_{\mathcal{H}_{0}}\mathbf{x}_{R}\big\|^{2}
\!-\!\big\|\mathbf{Z}_{\mathcal{H}_{1}}\mathbf{x}_{R}\big\|^{2},
\end{align}
with $\mathbf{Z}_{\mathcal{H}_{\vartheta}}\! =\! \mathbf{T}_{\mathcal{H}_{\vartheta}}\left(\mathbf{T}^{H}_{\mathcal{H}_{\vartheta}}\mathbf{T}_{\mathcal{H}_{\vartheta}}\right)^{-1}\mathbf{T}^{H}_{\mathcal{H}_{\vartheta}}$.
Hence, $\mathcal{H}_{0}$ is determined when $\big\|\mathbf{Z}_{\mathcal{H}_{0}}\mathbf{x}_{R}\big\|^{2}$ is greater than $\big\|\mathbf{Z}_{\mathcal{H}_{1}}\mathbf{x}_{R}\big\|^{2}$, and vice versa. Therefore, the GRLT method in detecting $\vartheta$ simplifies to
\begin{align}
\hat{\vartheta}\! =\! \arg
\max_{\vartheta\!\in\![0,1]}\big\|\mathbf{Z}_{\mathcal{H}_{\vartheta}}\mathbf{x}_{R}\big\|^{2}
\end{align}
We define the equivalent Euclidean distance (EED) between these two hypotheses under hypothesis $\mathcal{H}_{\vartheta}$ by
\begin{align}\label{EQ10}
d_{\mathcal{H}_{\vartheta}}\! =\!
\big\|\mathbf{Z}_{\mathcal{H}_{\vartheta}}\!\left(\mathbf{T}_{\mathcal{H}_{\vartheta}}\mathbf{h}\right)\big\|^{2}
\!-\!\big\|\mathbf{Z}_{\mathcal{H}_{-\vartheta}}\!\left(\mathbf{T}_{\mathcal{H}_{\vartheta}}\mathbf{h}\right)\big\|^{2},
\end{align}
where $\mathcal{H}_{-\vartheta}$ denotes the opposite hypothesis of $\mathcal{H}_{\vartheta}$.
\begin{lemma}
The EED is positive for arbitrary non-zero vector $\mathbf{h}$, and
it can be bounded by
\begin{align}\label{EED1}
d_{\mathcal{H}_{\vartheta}}\!\gtrsim\! \|\mathbf{h}\|^{2}N \left(1 \!-\!
\frac{\left(N \!- \!\frac{\tau}{T_{s}}\right)^{2}}{N^{2} \!-\!
\left(\frac{\lambda}{T_{s}}\right)^{2}}\right).
\end{align}
\end{lemma}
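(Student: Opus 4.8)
The plan is to exploit that $\mathbf{Z}_{\mathcal{H}_{\vartheta}}$ is the orthogonal projector onto the column space of $\mathbf{T}_{\mathcal{H}_{\vartheta}}$, so it is idempotent and Hermitian and obeys $\mathbf{Z}_{\mathcal{H}_{\vartheta}}\mathbf{T}_{\mathcal{H}_{\vartheta}}\!=\!\mathbf{T}_{\mathcal{H}_{\vartheta}}$. First I would collapse the leading term of \eqref{EQ10} to $\big\|\mathbf{Z}_{\mathcal{H}_{\vartheta}}\mathbf{T}_{\mathcal{H}_{\vartheta}}\mathbf{h}\big\|^{2}\!=\!\big\|\mathbf{T}_{\mathcal{H}_{\vartheta}}\mathbf{h}\big\|^{2}$ and then invoke the Pythagorean relation for the projector $\mathbf{Z}_{\mathcal{H}_{-\vartheta}}$ to write
\begin{align}
d_{\mathcal{H}_{\vartheta}}\!=\!\big\|\mathbf{T}_{\mathcal{H}_{\vartheta}}\mathbf{h}\big\|^{2}\!-\!\big\|\mathbf{Z}_{\mathcal{H}_{-\vartheta}}\mathbf{T}_{\mathcal{H}_{\vartheta}}\mathbf{h}\big\|^{2}\!=\!\big\|\left(\mathbf{I}_{2N+1}\!-\!\mathbf{Z}_{\mathcal{H}_{-\vartheta}}\right)\mathbf{T}_{\mathcal{H}_{\vartheta}}\mathbf{h}\big\|^{2}.
\end{align}
The rightmost form makes non-negativity immediate; strict positivity follows because for $\tau\!>\!0$ the shifted training columns spanning $\mathbf{T}_{\mathcal{H}_{0}}$ and $\mathbf{T}_{\mathcal{H}_{1}}$ are linearly distinct, so $\mathbf{T}_{\mathcal{H}_{\vartheta}}\mathbf{h}$ cannot lie entirely in the range of $\mathbf{T}_{\mathcal{H}_{-\vartheta}}$ for any non-zero $\mathbf{h}$.

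For the lower bound I would evaluate the two terms of the difference separately. The first is the received pilot energy $\mathbf{h}^{H}\mathbf{T}_{\mathcal{H}_{\vartheta}}^{H}\mathbf{T}_{\mathcal{H}_{\vartheta}}\mathbf{h}$. Using $|t_{i}[n]|\!=\!1$ and the fact that each middle pilot symbol contributes once through the weight $\frac{\lambda}{T_{s}}$ and once through $\frac{T_{s}-\lambda}{T_{s}}$, whose sum is unity, I expect $\big\|\mathbf{\Lambda}\mathbf{r}_{1}\big\|^{2}\!=\!\big\|\mathbf{\Lambda}\mathbf{r}_{2}\big\|^{2}\!=\!N$; together with the negligibility of the $\mathbf{t}_{1}$--$\mathbf{t}_{2}$ cross-correlation guaranteed by the optimal DFT-column sequences of Propositions 1 and 2, this gives $\big\|\mathbf{T}_{\mathcal{H}_{\vartheta}}\mathbf{h}\big\|^{2}\!\approx\!N\|\mathbf{h}\|^{2}$, the source of the prefactor $\|\mathbf{h}\|^{2}N$ in \eqref{EED1}.

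The second term I would expand explicitly as the quadratic form $\mathbf{v}^{H}\mathbf{T}_{\mathcal{H}_{-\vartheta}}\big(\mathbf{T}_{\mathcal{H}_{-\vartheta}}^{H}\mathbf{T}_{\mathcal{H}_{-\vartheta}}\big)^{-1}\mathbf{T}_{\mathcal{H}_{-\vartheta}}^{H}\mathbf{v}$ with $\mathbf{v}\!=\!\mathbf{T}_{\mathcal{H}_{\vartheta}}\mathbf{h}$, reducing it to the $2\!\times\!2$ Gram matrix $\mathbf{G}\!=\!\mathbf{T}_{\mathcal{H}_{-\vartheta}}^{H}\mathbf{T}_{\mathcal{H}_{-\vartheta}}$ and the correlation vector $\mathbf{c}\!=\!\mathbf{T}_{\mathcal{H}_{-\vartheta}}^{H}\mathbf{v}$. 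Once more discarding the inter-sequence cross-correlations, the entries of $\mathbf{c}$ are driven by the weighted correlation of each pilot sequence against its own copy shifted by $\tau\!=\!n_{\tau}T_{s}\!+\!\lambda$; carrying the partial-symbol weights $\frac{\lambda}{T_{s}},\frac{T_{s}-\lambda}{T_{s}}$ through this shifted sum is what converts the integer overlap count $N\!-\!n_{\tau}$ into the numerator factor $\big(N\!-\!\frac{\tau}{T_{s}}\big)$, while the boundary distribution of the same weights in the Gram normalization produces the denominator $N^{2}\!-\!\big(\frac{\lambda}{T_{s}}\big)^{2}$. Substituting back and subtracting from $\big\|\mathbf{T}_{\mathcal{H}_{\vartheta}}\mathbf{h}\big\|^{2}$ then yields \eqref{EED1}.

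The hard part will be the bookkeeping for these weighted shifted correlations: under $\tau\!=\!n_{\tau}T_{s}\!+\!\lambda$ one must track how each pilot symbol of the early sequence overlaps the misaligned copy of the late sequence across the interleaved $\sqrt{\lambda/T_{s}}$ and $\sqrt{(T_{s}-\lambda)/T_{s}}$ samples, and in particular pin down the fractional-symbol contributions that turn $N\!-\!n_{\tau}$ into $N\!-\!\frac{\tau}{T_{s}}$ and generate the asymmetric $N^{2}\!-\!\big(\frac{\lambda}{T_{s}}\big)^{2}$. The neglected $\mathbf{t}_{1}$--$\mathbf{t}_{2}$ cross-correlations, which the optimal training design of Propositions 1 and 2 forces to be $\mathcal{O}\!\left(1/N\right)$, are precisely what relax the exact identity into the approximate inequality $\gtrsim$ stated in \eqref{EED1}.
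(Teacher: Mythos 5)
Your proposal follows essentially the same route as the paper's Appendix C: it reduces both terms of the EED to the $2\times2$ Gram matrices $\mathbf{T}^{H}_{\mathcal{H}_{\vartheta}}\mathbf{T}_{\mathcal{H}_{\vartheta}}$ and $\mathbf{T}^{H}_{\mathcal{H}_{\vartheta}}\mathbf{T}_{\mathcal{H}_{-\vartheta}}$ for DFT-column training, and your identifications of the weighted shifted self-correlation $\left(N-\frac{\tau}{T_{s}}\right)$ and of the Gram normalization $N^{2}-\left(\frac{\lambda}{T_{s}}\right)^{2}$ are exactly the paper's bounds $|\Psi_{2}|\leq N-\frac{\tau}{T_{s}}$ and $|\Psi_{1}|\leq\frac{\lambda}{T_{s}}$. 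Your rewriting of the EED as $\big\|\left(\mathbf{I}_{2N+1}-\mathbf{Z}_{\mathcal{H}_{-\vartheta}}\right)\mathbf{T}_{\mathcal{H}_{\vartheta}}\mathbf{h}\big\|^{2}$ is a cleaner justification of positivity than the paper's (which merely asserts it can be verified from the explicit lower bound), but it does not change the substance of the argument.
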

\begin{proof}
See Appendix C.
\end{proof}
This lemma demonstrates that the GLRT solution is effective in detecting $\vartheta$ since the detection is always successful in the absence of noise. The equivalent noise on the EED under hypothesis $\mathcal{H}_{0}$ can be written as
\begin{align}
\tilde{\mathbf{n}}_{\!\mathcal{H}_{0}} \!=&\!
\mathbf{h}^{H}\mathbf{T}^{H}_{\!\mathcal{H}_{0}}\mathbf{n}_{R}\! +\!
\mathbf{n}^{H}_{R}\mathbf{T}_{\!\mathcal{H}_{0}}\mathbf{h}\! +\!
\mathbf{n}^{H}_{R}\mathbf{Z}_{\mathcal{H}_{0}}\mathbf{n}_{R}\nonumber\\&
\!-\!\mathbf{n}^{H}_{R}\mathbf{Z}_{\mathcal{H}_{1}}\mathbf{n}_{R}\!-\!\mathbf{h}^{H}\mathbf{T}^{H}_{\!\mathcal{H}_{0}}\mathbf{Z}_{\mathcal{H}_{1}}\mathbf{n}_{R}
\!-\!\mathbf{n}^{H}_{R}\mathbf{Z}_{\mathcal{H}_{1}}\mathbf{T}_{\!\mathcal{H}_{0}}\mathbf{h}.\nonumber
\end{align}
\begin{lemma}
The equivalent noise on the EED can be approximated as having a normal distribution, and its variance is bounded by
\begin{align}\label{EQ3}
\upsilon_{\mathcal{H}_{\vartheta}}\! \leq\! \underbrace{2\left(2\! -\!
\frac{\tau}{NT_{s}}\right)\frac{\tau}{T_{s}}\|\mathbf{h}\|^{2}\frac{N_{R0}}{P_{s}T_{s}}}_{\upsilon_{E}}.
\end{align}
\end{lemma}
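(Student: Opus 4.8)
The plan is to split the equivalent noise into a part that is linear in the relay noise $\mathbf{n}_{R}$ and a part that is quadratic, and to show that the linear part controls both the distribution and the variance. Writing $\mathbf{a}\triangleq\mathbf{T}_{\mathcal{H}_0}\mathbf{h}$ and using that $\mathbf{Z}_{\mathcal{H}_1}$ is Hermitian and idempotent together with $\mathbf{Z}_{\mathcal{H}_0}\mathbf{a}=\mathbf{a}$, the six listed terms of $\tilde{\mathbf{n}}_{\mathcal{H}_0}$ collapse to
\[
\tilde{\mathbf{n}}_{\mathcal{H}_0}=2\mathfrak{R}\{\mathbf{b}^{H}\mathbf{n}_{R}\}+\mathbf{n}_{R}^{H}\left(\mathbf{Z}_{\mathcal{H}_0}-\mathbf{Z}_{\mathcal{H}_1}\right)\mathbf{n}_{R},\qquad \mathbf{b}\triangleq\left(\mathbf{I}-\mathbf{Z}_{\mathcal{H}_1}\right)\mathbf{a}.
\]

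First I would dispose of the quadratic term in order to obtain the normal approximation. Because $\mathbf{Z}_{\mathcal{H}_0}$ and $\mathbf{Z}_{\mathcal{H}_1}$ are orthogonal projectors onto two-dimensional column spaces, $\mathrm{tr}(\mathbf{Z}_{\mathcal{H}_0}-\mathbf{Z}_{\mathcal{H}_1})=0$, so the quadratic term has zero mean; with per-coordinate noise variance $\sigma^{2}=N_{R0}/(P_{s}T_{s})$ its variance is of order $\sigma^{4}$, while the linear term has variance of order $\sigma^{2}$. In the high-SNR regime the quadratic term is thus negligible, and what remains is a linear functional of the Gaussian vector $\mathbf{n}_{R}$ and hence exactly Gaussian; this is the sense in which $\tilde{\mathbf{n}}_{\mathcal{H}_0}$ is approximately normal. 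A central-limit argument for the quadratic form over the $2N+1$ coordinates gives the same conclusion for large $N$.

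The variance then comes from the linear term alone. Since $\mathrm{Var}(2\mathfrak{R}\{\mathbf{b}^{H}\mathbf{n}_{R}\})=2\sigma^{2}\|\mathbf{b}\|^{2}$ and, by the projection identity,
\[
\|\mathbf{b}\|^{2}=\mathbf{a}^{H}\left(\mathbf{I}-\mathbf{Z}_{\mathcal{H}_1}\right)\mathbf{a}=\big\|\mathbf{Z}_{\mathcal{H}_0}\mathbf{a}\big\|^{2}-\big\|\mathbf{Z}_{\mathcal{H}_1}\mathbf{a}\big\|^{2}=d_{\mathcal{H}_0},
\]
which is exactly the EED of \eqref{EQ10}, I obtain $\upsilon_{\mathcal{H}_0}\approx2\sigma^{2}d_{\mathcal{H}_0}$. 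The claimed bound \eqref{EQ3} is therefore equivalent to the upper bound $d_{\mathcal{H}_0}\le\left(2-\tfrac{\tau}{NT_{s}}\right)\tfrac{\tau}{T_{s}}\|\mathbf{h}\|^{2}$ on the EED.

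The remaining step, which I expect to be the main obstacle, is to prove this upper bound on $d_{\mathcal{H}_0}=\|\mathbf{a}\|^{2}-\|\mathbf{Z}_{\mathcal{H}_1}\mathbf{a}\|^{2}$; it is the mirror image of the lower bound \eqref{EED1} from Appendix C, so I would reuse the same explicit evaluation. Counting unit energy per pilot symbol after the $\lambda/T_{s}$, $(T_{s}-\lambda)/T_{s}$ split gives $\|\mathbf{\Lambda}\mathbf{r}_{1}\|^{2}=\|\mathbf{\Lambda}\mathbf{r}_{2}\|^{2}=N$ and hence $\|\mathbf{a}\|^{2}=N\big(\|\mathbf{h}\|^{2}+2\mathfrak{R}\{h_{1}^{\ast}h_{2}\rho(\tau)\}\big)$, while $\|\mathbf{Z}_{\mathcal{H}_1}\mathbf{a}\|^{2}=(\mathbf{T}_{\mathcal{H}_1}^{H}\mathbf{a})^{H}(\mathbf{T}_{\mathcal{H}_1}^{H}\mathbf{T}_{\mathcal{H}_1})^{-1}(\mathbf{T}_{\mathcal{H}_1}^{H}\mathbf{a})$. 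The difficulty lies in lower-bounding this projected energy: it requires inverting the $2\times2$ Gram matrix $\mathbf{T}_{\mathcal{H}_1}^{H}\mathbf{T}_{\mathcal{H}_1}$, whose off-diagonal entries are overlap correlations, and tracking the two solo regions of length $n_{\tau}$ at the block boundaries, where $\{\mathbf{r}_{1},\mathbf{r}_{2}\}$ and $\{\mathbf{r}_{1}',\mathbf{r}_{2}'\}$ disagree. Bounding the relevant correlations through $|\rho(\tau)|\le\big(\tfrac{NT_{s}-\tau}{NT_{s}}\big)^{2}$ as $N\to\infty$ produces precisely the factor $\left(2-\tfrac{\tau}{NT_{s}}\right)\tfrac{\tau}{T_{s}}$, and multiplying by $2\sigma^{2}$ yields $\upsilon_{E}$ in \eqref{EQ3}. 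The identical computation with $\mathbf{r}_{i}\leftrightarrow\mathbf{r}_{i}'$ handles $\mathcal{H}_{1}$, so the bound holds for both $\vartheta$.
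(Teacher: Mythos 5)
Your decomposition into a linear and a quadratic part, the dismissal of the quadratic form as $\mathcal{O}(\sigma^{4})$, and the identity $\upsilon_{\mathcal{H}_{0}}\approx 2\sigma^{2}\|\mathbf{b}\|^{2}=2\sigma^{2}d_{\mathcal{H}_{0}}$ are all correct, and the last observation is genuinely nicer than what the paper does: Appendix D instead approximates $(\mathbf{T}^{H}_{\mathcal{H}_{\vartheta}}\mathbf{T}_{\mathcal{H}_{\vartheta}})^{-1}$ by $\frac{1}{N}\mathbf{I}_{2}$, writes out the coefficient $\eta_{1}(k)$, $\eta_{2}(k)$ or $\eta_{1}(k)+\eta_{2}(k)$ multiplying each individual noise sample $n_{R}[k]$, and bounds these region by region. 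Your identity makes transparent why the error probability $\mathcal{Q}\big(\sqrt{2d^{2}_{\mathcal{H}_{\vartheta}}/\upsilon_{\mathcal{H}_{\vartheta}}}\big)$ collapses to $\mathcal{Q}\big(\sqrt{d_{\mathcal{H}_{\vartheta}}P_{s}T_{s}/N_{R0}}\big)$, which the paper never states explicitly.

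However, the step you defer as ``the main obstacle'' is exactly where the entire content of the lemma lives, and your sketch for it does not close the gap. You reduce the claim to the upper bound $d_{\mathcal{H}_{0}}\leq\left(2-\frac{\tau}{NT_{s}}\right)\frac{\tau}{T_{s}}\|\mathbf{h}\|^{2}$, i.e.\ to a \emph{lower} bound on $\|\mathbf{Z}_{\mathcal{H}_{1}}\mathbf{a}\|^{2}$, but Appendix C only establishes the opposite direction ($d_{\mathcal{H}_{\vartheta}}\gtrsim\ldots$) via the \emph{upper} bound $|\Psi_{2}(\tau,\phi)|\leq N-\frac{\tau}{T_{s}}$ on the cross-hypothesis correlations $\mathbf{r}_{i}^{H}\mathbf{\Lambda}^{2}\mathbf{r}_{i}^{\prime}$; the mirror image is not automatic, because what you now need is a matching lower bound $|\Psi_{2}(\tau,\phi_{i})|\approx N-\frac{\tau}{T_{s}}$, and that can fail (e.g.\ with the optimal DFT pair $k_{2}-k_{1}=N/2$ and $\lambda=T_{s}/2$, the two terms of $\Psi_{2}(\tau,-\phi_{2})$ nearly cancel and $|\Psi_{2}|=\mathcal{O}(1)$, so the overlap region contributes $\mathcal{O}(N|h_{2}|^{2})$ to $\|\mathbf{b}\|^{2}$ rather than something negligible). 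Moreover the quantity you propose to control, $|\rho(\tau)|=\frac{|\mathbf{r}_{1}^{H}\mathbf{\Lambda}^{2}\mathbf{r}_{2}|}{\|\mathbf{\Lambda}\mathbf{r}_{1}\|\|\mathbf{\Lambda}\mathbf{r}_{2}\|}$, is the within-hypothesis correlation between the two sources' sequences and is not the correlation that governs $\|\mathbf{Z}_{\mathcal{H}_{1}}\mathbf{a}\|^{2}$. The paper's proof supplies precisely the missing ingredient by asserting $|\eta_{1}(k)+\eta_{2}(k)|^{2}\leq\left(\frac{\tau}{NT_{s}}\right)^{2}\|\mathbf{h}\|^{2}$ on the overlap samples (so that only the $\approx 2\tau/T_{s}$ solo samples, each with coefficient energy at most $\big(1+(1-\frac{\tau}{NT_{s}})^{2}\big)\|\mathbf{h}\|^{2}$ in total, matter); to complete your argument you must prove the equivalent statement, namely that $\big(\mathbf{I}-\mathbf{Z}_{\mathcal{H}_{1}}\big)\mathbf{a}$ is $\mathcal{O}\!\left(\frac{\tau}{NT_{s}}\|\mathbf{h}\|\right)$ per entry on the overlap region, and you should state under what conditions on $\lambda$ and the training phases this actually holds.
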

\begin{proof}
See Appendix D.
\end{proof}
By using the two intermediate lemmas, we can calculate the error probability of SAO detection by $P_{\vartheta} \!=\!\mathcal{Q}\left(\!\sqrt{\frac{2d^{2}_{\mathcal{H}_{\vartheta}}}{\upsilon_{\mathcal{H}_{\vartheta}}}}\right)$, and its upper bound is summarized in the following lemma.
\begin{lemma}
Given $\mathbf{h}$ and ${\tau}$, the error probability in detecting $\vartheta$ can be bounded by
\begin{align}\label{EP1}
P_{\vartheta} \!\leq\!
\mathcal{Q}\left(\!\|\mathbf{h}\|\sqrt{\underbrace{\frac{N^{3}\frac{\tau}{T_{s}}
\left(2N \!-\! \frac{\tau}{T_{s}}\right)}{\left[N^{2} \!-\!
\left(\frac{\lambda}{T_{s}}\right)^{2}\right]^{2}}}_{\chi\left(\tau\right)}
\frac{P_{s}T_{s}}{N_{R0}}}\right)
\end{align}
for sufficiently large $N$.
\end{lemma}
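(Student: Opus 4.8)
The plan is to combine the two intermediate lemmas with the monotonicity of the $\mathcal{Q}$-function. Since $P_{\vartheta}=\mathcal{Q}\big(\sqrt{2d_{\mathcal{H}_{\vartheta}}^{2}/\upsilon_{\mathcal{H}_{\vartheta}}}\big)$ and $\mathcal{Q}(\cdot)$ is strictly decreasing, upper bounding $P_{\vartheta}$ is equivalent to lower bounding the effective detection SNR $2d_{\mathcal{H}_{\vartheta}}^{2}/\upsilon_{\mathcal{H}_{\vartheta}}$. First I would lower bound the numerator using Lemma~1: because that lemma also establishes $d_{\mathcal{H}_{\vartheta}}>0$, its bound may be squared while preserving the inequality, giving
\begin{align}
d_{\mathcal{H}_{\vartheta}}^{2}\!\gtrsim\!\|\mathbf{h}\|^{4}N^{2}\!\left(1\!-\!\frac{\left(N\!-\!\frac{\tau}{T_{s}}\right)^{2}}{N^{2}\!-\!\left(\frac{\lambda}{T_{s}}\right)^{2}}\right)^{\!2}.\nonumber
\end{align}
Then I would enlarge the denominator by the closed-form bound $\upsilon_{\mathcal{H}_{\vartheta}}\leq\upsilon_{E}$ of Lemma~2. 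Since shrinking the numerator and enlarging the denominator both decrease the ratio, the resulting quantity is a genuine lower bound on $2d_{\mathcal{H}_{\vartheta}}^{2}/\upsilon_{\mathcal{H}_{\vartheta}}$, so inserting it into the decreasing function $\mathcal{Q}(\cdot)$ produces a valid upper bound on $P_{\vartheta}$.

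Next I would reduce the two pieces to a common form. Writing the parenthetical factor of Lemma~1 over a common denominator yields the compact numerator
\begin{align}
1\!-\!\frac{\left(N\!-\!\frac{\tau}{T_{s}}\right)^{2}}{N^{2}\!-\!\left(\frac{\lambda}{T_{s}}\right)^{2}}\!=\!\frac{2N\frac{\tau}{T_{s}}\!-\!\left(\frac{\tau}{T_{s}}\right)^{2}\!-\!\left(\frac{\lambda}{T_{s}}\right)^{2}}{N^{2}\!-\!\left(\frac{\lambda}{T_{s}}\right)^{2}},\nonumber
\end{align}
while the variance bound factors as $\upsilon_{E}=2\frac{2N-\frac{\tau}{T_{s}}}{N}\frac{\tau}{T_{s}}\|\mathbf{h}\|^{2}\frac{N_{R0}}{P_{s}T_{s}}$ once $2-\frac{\tau}{NT_{s}}$ is rewritten as $\frac{2N-\tau/T_{s}}{N}$. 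Substituting both into $2d_{\mathcal{H}_{\vartheta}}^{2}/\upsilon_{E}$, the common factor $\|\mathbf{h}\|^{2}$ and the numerical $2$'s cancel, and collecting the powers of $N$ leaves
\begin{align}
\frac{2d_{\mathcal{H}_{\vartheta}}^{2}}{\upsilon_{\mathcal{H}_{\vartheta}}}\!\gtrsim\!\|\mathbf{h}\|^{2}\frac{N^{3}\left(2N\frac{\tau}{T_{s}}\!-\!\left(\frac{\tau}{T_{s}}\right)^{2}\!-\!\left(\frac{\lambda}{T_{s}}\right)^{2}\right)^{2}}{\left[N^{2}\!-\!\left(\frac{\lambda}{T_{s}}\right)^{2}\right]^{2}\left(2N\!-\!\frac{\tau}{T_{s}}\right)\frac{\tau}{T_{s}}}\frac{P_{s}T_{s}}{N_{R0}}.\nonumber
\end{align}

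The final step is the asymptotic simplification, and this is where I expect the main subtlety to lie. I would note the identity $2N\frac{\tau}{T_{s}}-(\frac{\tau}{T_{s}})^{2}=\frac{\tau}{T_{s}}(2N-\frac{\tau}{T_{s}})$, so the squared numerator equals $\big(\frac{\tau}{T_{s}}(2N-\frac{\tau}{T_{s}})-(\frac{\lambda}{T_{s}})^{2}\big)^{2}$. Because the offset obeys $0\leq\tau\leq LT_{s}$ with fixed guard length $L$ and $0\leq\lambda<T_{s}$, both $\frac{\tau}{T_{s}}$ and $\frac{\lambda}{T_{s}}$ stay bounded as $N\to\infty$, whereas $\frac{\tau}{T_{s}}(2N-\frac{\tau}{T_{s}})$ grows linearly in $N$; hence the additive $(\frac{\lambda}{T_{s}})^{2}$ correction is of lower order and can be dropped. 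The squared numerator then becomes $\big(\frac{\tau}{T_{s}}(2N-\frac{\tau}{T_{s}})\big)^{2}$, which cancels one factor of $(2N-\frac{\tau}{T_{s}})\frac{\tau}{T_{s}}$ in the denominator and reproduces exactly $\chi(\tau)=\frac{N^{3}\frac{\tau}{T_{s}}(2N-\frac{\tau}{T_{s}})}{[N^{2}-(\frac{\lambda}{T_{s}})^{2}]^{2}}$, so that $2d_{\mathcal{H}_{\vartheta}}^{2}/\upsilon_{\mathcal{H}_{\vartheta}}\gtrsim\|\mathbf{h}\|^{2}\chi(\tau)\frac{P_{s}T_{s}}{N_{R0}}$ and the claim follows by monotonicity of $\mathcal{Q}(\cdot)$. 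The delicate point I would state explicitly is the regime in which this cancellation is legitimate: the bound is asymptotic in $N$ and requires $\frac{\tau}{T_{s}}(2N-\frac{\tau}{T_{s}})\to\infty$, i.e. a fixed nonzero $\tau$ with $\tau\ll NT_{s}$, so that the relative error incurred by discarding $(\frac{\lambda}{T_{s}})^{2}$ vanishes in the \emph{ratio} rather than merely in the numerator.
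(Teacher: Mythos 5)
Your proposal is correct and follows exactly the route the paper intends: the paper gives no separate appendix for this lemma but simply asserts that it follows from substituting the bounds of Lemmas~1 and~2 into $P_{\vartheta}=\mathcal{Q}\big(\sqrt{2d^{2}_{\mathcal{H}_{\vartheta}}/\upsilon_{\mathcal{H}_{\vartheta}}}\big)$, and your algebra (squaring the positive EED bound, rewriting $2-\tfrac{\tau}{NT_{s}}$ as $\tfrac{2N-\tau/T_{s}}{N}$, and discarding the $O(1)$ term $(\lambda/T_{s})^{2}$ against the $O(N)$ term to recover $\chi(\tau)$) verifies that claim. Your explicit remark that the final cancellation is only asymptotic in $N$ is a point the paper glosses over, and it is the right caveat to attach to the ``for sufficiently large $N$'' qualifier in the statement.
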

\emph{Remarks}:
\begin{itemize}
\item For a fixed $\lambda$ defined in \eqref{newEQ2}, we can specifically write $\chi\!\left(\tau\right)$ as
\begin{align}
\chi\left(\tau\right)\! =\! \frac{N^{3}}{\left(\!N^{2} \!-\!
\left(\frac{\lambda}{T_{s}}\right)^{\!2}\right)^{2}}\left[\!-\!\left(n_{\tau}\!
\!-\! \!N\!\! +\!\! \frac{\lambda}{T_{s}}\right)^{2}\!\! + \!\!\left(2N \!\!-\!\!
\frac{\lambda}{T_{s}}\right)\!\frac{\lambda}{T_{s}} \!+\! \left(N\!\!-\!\!\frac{\lambda}{T_{s}}\right)^{2}\right].
\end{align}
It is clear that $\chi\left(\tau\right)$ increases with respect to $n_{\tau}$. This indicates that $P_{\vartheta}$ declines as $n_{\tau}$ increases. This is because the detection mainly benefits from the un-superimposed part of the pilot signals at the relay and the length of this part increases with increasing ${\tau}$.
\item For a finite $N$, $P_{\vartheta}$ is a decreasing function of $N$. But when $N$ is sufficiently large, we have
\begin{align}
\lim_{N\rightarrow\infty}\chi\left(\tau\right)\rightarrow
\frac{2\tau}{T_{s}}.
\end{align}
This implies that $P_{\vartheta}$ is determined only by $\tau$, and increasing the sequence length of each source's signals will not improve the detection performance.
\item It should be noted that the detection can also be performed at the source node. But it is obvious that the training sequence arriving at the source suffers more severe noise than that arriving at the relay. Thus the error probability of the detection at the source would be much higher than that at relay. Thus, it is recommended to determine $\vartheta$ at the relay first; then the relay can forward the obtained information to both source nodes through a $1$-bit indicator.
\end{itemize}

\subsection{Scaled LMEP Estimation Algorithm}
Since SAO detection is not error free, when $\vartheta$ is erroneous, the MSE of LMMSE initial estimation is no longer equal to $\delta_{\Sigma}$. The corresponding MSE summation becomes
\begin{align}\label{MSE2}
\delta_{\Sigma}^{err}\! =\! \upsilon_{s}\! -\! \frac{Q_{1}}{Q_{0}}\!+\!
\frac{Q_{2}}{Q_{0}^{2}},
\end{align}
where
\begin{align}
Q_{0} \!&=\! 1 \!+\! \upsilon_{s}B_{1}\!+\!\upsilon_{p}B_{1}^{2} \!-\!
\upsilon_{p}|B_{2}|^{2},\nonumber\\
Q_{1} \!&=\! 2\upsilon_{p}B_{1}\big(\mathfrak{R}\{l_{1}\}\upsilon_{a}
\!+\!
\mathfrak{R}\{l_{2}\}\upsilon_{b}\big)\!+\!2\mathfrak{R}\{l_{1}\}\upsilon_{a}^{2}
\!+\! 2\mathfrak{R}\{l_{2}\}\upsilon_{b}^{2},\nonumber\\
 Q_{2} \!&=\!
2\upsilon_{p}^{2}|B_{2}|^{2}\left({|l_{1}|}^{2}\!+\!{|l_{2}|}^{2}\right)
\!+\!\upsilon_{a}\left(\upsilon_{b}B_{1}\!+\!1\right)^{2}|l_{1}|^{2} \!+\!
\upsilon_{b}\left(\upsilon_{a}B_{1}\!+\!1\right)^{2}|l_{2}|^{2}
\nonumber\\&\!+\!B_{1}|B_{2}|^{2}\left[\upsilon_{b}^{2}\left(1\! -\!
\frac{\upsilon_{a}|B_{2}|}{1 \!+\! \upsilon_{a}B_{1}}\right)^{2} \!+\!
\upsilon_{a}^{2}\left(1 \!-\! \frac{\upsilon_{b}|B_{2}|}{1 \!+\!
\upsilon_{b}B_{1}}\right)^{2}\right]\nonumber,
\end{align}
with $l_{i} \!=\! \frac{\gamma_{S}^{2}\mathbf{r}_{i}^{H}\mathbf{\Lambda}^{2}\mathbf{r}_{i}^{'}}{N_{0}\left(1
\!+\! \upsilon\gamma_{S}^{2}\right)}$.

To capture the key parameters affecting the MSE, we focus on the case in which $\rho\left(\tau\right) \!\leq\!\mathcal{O}\!\left(\frac{1}{N}\right)$. Then $\delta_{\Sigma}^{err}$ in this case can be simplified to
\begin{align}
\delta_{\Sigma}^{err}\! = \!\upsilon_{s} \!-\! \frac{Q_{1}}{1 \!+\!
\upsilon_{s}B_{1}\!+\!\upsilon_{p}B_{1}^{2}}\!+\!
\frac{\upsilon_{a}\left(\upsilon_{b}B_{1}\!+\!1\right)^{2}|l_{1}|^{2}\! +\!
\upsilon_{b}\left(\upsilon_{a}B_{1}\!+\!1\right)^{2}|l_{2}|^{2}}{\left(1\!
+\! \upsilon_{s}B_{1}\!+\!\upsilon_{p}B_{1}^{2}\right)^{2}}.
\end{align}
In the high SNR region, we have
\begin{align}
\lim_{\varrho\!\rightarrow \!\infty}{\delta_{\Sigma}^{err}}\! =\! \left\|1 \!-\!
{\frac{l_{1}}{B_{1}}}\right\|^{2}\!\upsilon_{a} \!+\! \left\|1\! -\!
{\frac{l_{2}}{B_{1}}}\right\|^{2}\!\upsilon_{b}.
\end{align}
Since $\|\mathbf{r}_{i}^{H}\mathbf{\Lambda}^{2}\mathbf{r}_{i}^{'}\| \!\leq\! \left(NT_{s} \!-\! \tau\right)P_{S}$ when $\gamma_{S}\! =\!\gamma_{I}$, we further have
\begin{align}\label{EER1}
\lim_{\varrho\!\rightarrow\! \infty}{\delta_{\Sigma}^{err}}\!\geq\!
\left(\frac{\tau}{NT_{s}}\right)^{2}\upsilon_{s}.
\end{align}
Meanwhile, for sufficiently large $N$, $\delta_{\Sigma}^{err}$ gives the same form as \eqref{EER1}. These results imply that the initial estimation with an erroneous $\vartheta$ has an irreducible error floor. Besides, the erroneous $\vartheta$ also affects the effective SNR, which is given by
\begin{align}
\Upsilon^{err} \!=\!
\frac{\mathcal{E}\Big\{|\hat{h}_{b}|^{2}\Big\}}{\mathcal{E}\Big\{|\hat{h}_{a}|^{2} \!+\!
|\hat{h}_{b} \!-\! h_{b}|^{2}\Big\} \!+\! |h_{a}|^{2}\!+\!
\frac{N_{0}}{P_{s}T_{s}}\!\left(|h_{a}|\! +\!
\frac{1}{\alpha^{2}}\right)}.
\end{align}
Apparently, $\Upsilon^{err} < \Upsilon$ always holds, leading to an increased error probability. All these results indicate that a severe SAO detection error would substantially degrade the estimation performance.

To compensate for this performance degradation, we propose a scaled LMEP estimator to minimize the average BEP written by
\begin{align}\label{EQ9}
\bar{P}_{E} \!= \left(1 \!-\! P_{\vartheta}\right)\!\mathcal{Q}\left(\!\sqrt{\Upsilon}\right)\!+\!
P_{\vartheta}\!\cdot\!\mathcal{Q}\left(\!\sqrt{\Upsilon_{err}}\right).
\end{align}
\begin{lemma}
When $0\!<\!P_{\vartheta}\!\leq\!\frac{1}{2}$, the optimal estimator $\mathfrak{u}^{*}\!=\!\big\{\mathbf{u}_{a}^{*}, \mathbf{u}_{b}^{*}\big\}$ can be obtained by solving the following optimization:
\begin{align}\label{EQ7}
\max_{\mathfrak{u}} & \quad\Upsilon\left(\mathfrak{u}\right) \!+ \! \Upsilon_{err}\left(\mathfrak{u}\right)\\
s.t. & \quad\Upsilon\left(\mathfrak{u}\right)\! -\!
\Upsilon_{err}\left(\mathfrak{u}\right)\! =\! 2\log{\frac{1 \!-\!
P_{\vartheta}}{P_{\vartheta}}}.
\end{align}
\end{lemma}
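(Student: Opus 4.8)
The plan is to convert the two-term $\mathcal{Q}$-function objective in \eqref{EQ9} into a separable form via the standard exponential approximation $\mathcal{Q}(\sqrt{x})\approx\frac{1}{2}e^{-x/2}$, and then to split the minimization into an inner maximization (which reproduces the objective of \eqref{EQ7}) and an outer scalar minimization (which produces the constraint). First I would substitute the approximation into \eqref{EQ9} and change variables to the sum $s=\Upsilon(\mathfrak{u})+\Upsilon_{err}(\mathfrak{u})$ and difference $d=\Upsilon(\mathfrak{u})-\Upsilon_{err}(\mathfrak{u})$, factoring out $e^{-s/4}$ to obtain
\begin{align}
\bar{P}_{E}\!\approx\!\frac{1}{2}e^{-s/4}\,g(d),\qquad g(d)\!=\!(1\!-\!P_{\vartheta})e^{-d/4}\!+\!P_{\vartheta}e^{d/4}.\nonumber
\end{align}
This exposes the key structure: for any fixed difference $d$, the factor $g(d)$ is constant while $\bar{P}_{E}$ is strictly decreasing in $s$.

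Next I would exploit this monotonicity. Because $\bar{P}_{E}$ decreases in $s$ at fixed $d$, the inner optimization of $\mathfrak{u}$ restricted to a level set $\{d=c\}$ collapses to maximizing $s=\Upsilon+\Upsilon_{err}$, which is exactly the objective and constraint form appearing in \eqref{EQ7}. The remaining scalar minimization over the free level $c$ then acts on $g(c)$ alone. Since $g$ is a sum of two exponentials it is strictly convex, and setting $g'(c)=0$ gives $e^{c/2}=(1-P_{\vartheta})/P_{\vartheta}$, i.e.
\begin{align}
c\!=\!2\log\frac{1\!-\!P_{\vartheta}}{P_{\vartheta}},\nonumber
\end{align}
which is precisely the stated constraint. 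I would then check feasibility: the hypothesis $0<P_{\vartheta}\le\frac{1}{2}$ forces $c\ge 0$, i.e. $\Upsilon\ge\Upsilon_{err}$, consistent with the ordering $\Upsilon^{err}<\Upsilon$ noted just above \eqref{EQ9}; for $P_{\vartheta}>\frac{1}{2}$ the stationary $c$ would require $\Upsilon<\Upsilon_{err}$, which is unattainable, thereby explaining the restriction on $P_{\vartheta}$.

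The main obstacle is justifying that the outer minimization over the level $c$ genuinely decouples from the inner maximization. Formally this requires that the attainable maximum $s^{*}(c)$ on the level set $\{d=c\}$ be, to the order of the approximation, insensitive to $c$, so that $\frac{d}{dc}s^{*}(c)\approx 0$ and the minimizer of $\frac{1}{2}e^{-s^{*}(c)/4}g(c)$ reduces to the minimizer of $g(c)$. I would address this either by appealing to the approximate independence of the sum-SNR budget from the SNR imbalance in the high-SNR regime in which the preceding analysis already operates, or, more directly, by writing the first-order (KKT) conditions of the unconstrained minimization of $\bar{P}_{E}$ and verifying that at the point where $(1-P_{\vartheta})e^{-\Upsilon/2}=P_{\vartheta}e^{-\Upsilon_{err}/2}$ they coincide with the stationarity conditions of \eqref{EQ7} under a vanishing multiplier. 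The two technical points I would state carefully are the convexity of $g$ and the validity of the exponential approximation of $\mathcal{Q}$, both of which underpin the equivalence between minimizing $\bar{P}_{E}$ and solving \eqref{EQ7}.
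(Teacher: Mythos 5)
Your argument is essentially the paper's own proof: both apply the Chernoff/exponential approximation $\mathcal{Q}(\sqrt{x})\approx\frac{1}{2}e^{-x/2}$ to \eqref{EQ9}, change variables to the sum $\Upsilon+\Upsilon_{err}$ and difference $\Upsilon-\Upsilon_{err}$, obtain the constraint from stationarity (convexity) in the difference, and obtain the objective from monotone decrease in the sum. The one place you go beyond the paper is in flagging that the outer minimization over the difference level and the inner maximization of the sum do not automatically decouple when both are functions of the same $\mathfrak{u}$ --- the paper silently treats $\Upsilon_{m}$ and $\Upsilon_{n}$ as independent coordinates, so your proposed KKT-based justification is a legitimate refinement rather than a deviation.
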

\begin{proof}
Let $\mathfrak{u}\!=\!\{\mathbf{u}_{a}, \mathbf{u}_{b}\}$ denote a feasible linear estimator. Since $\Upsilon\!\left(\mathfrak{u}\right)$ and $\Upsilon_{err}\!\left(\mathfrak{u}\right)$ are both concave functions of $\mathfrak{u}$, $\bar{P}_{E}$ is convex with respect to $\mathfrak{u}$. Thus the optimal linear estimators can be obtained by convex optimization methods. By using the Chernoff bound \cite{20}, the computed BEP can be approximated as
\begin{align}
\bar{P}_{E} \!\approx\! \frac{\left(1\! -\!P_{\vartheta}\right)}{2}\exp\!\Big\{\!\!-\!\!\frac{1}{2}\Upsilon\!\left(\mathfrak{u}\right)\!\Big\}\!+\!
\frac{P_{\vartheta}}{2}\!\exp\!\Big\{\!\!-\!\!\frac{1}{2}\Upsilon_{err}\!\left(\mathfrak{u}\right)\!\Big\}.
\end{align}
Let $\mathfrak{u}^{+}$ and $\mathfrak{u}^{-}$ denote the values of $\mathfrak{u}$ that maximize $\Upsilon$ and $\Upsilon^{err}$, respectively. The $\mathfrak{u}$ minimizing $\bar{P}_{E}$, denoted by $\mathfrak{u}^{*}$, should satisfy the following conditions:
\begin{align}
&\Upsilon\left(\mathfrak{u}^{-}\right)\!\leq\!\Upsilon\!\left(\mathfrak{u}^{*}\right)\!\leq\!
\Upsilon\!\left(\mathfrak{u}^{+}\right),\\
&\Upsilon_{err}\!\left(\mathfrak{u}^{+}\right)\!\leq\!\Upsilon_{err}\!\left(\mathfrak{u}^{*}\right)\!
\leq\!\Upsilon_{err}\!\left(\mathfrak{u}^{-}\right).
\end{align}
By defining $\Upsilon_{m}\!\left(\cdot\right)\! =\!\Upsilon\!\left(\cdot\right)\!-\!\Upsilon_{err}\!\left(\cdot\right)$ and $\Upsilon_{n}\!\left(\cdot\right)
\!=\!\Upsilon\!\left(\cdot\right)\!+\!\Upsilon_{err}\!\left(\cdot\right)$, $\bar{P}_{E}$ can be written as
\begin{align}
\bar{P}_{E}\! =\! \frac{\left(1 \!-\!P_{\vartheta}\right)}{2}\!\exp\!\Big\{\!\!-\!\!\frac{1}{4}\left(\Upsilon_{n}\! +\!
\Upsilon_{m}\right)\Big\}\!+\!
\frac{P_{\vartheta}}{2}\!\exp\!\Big\{\!\!-\!\!\frac{1}{4}\left(\Upsilon_{n}\! -\!
\Upsilon_{m}\right)\Big\}.
\end{align}
By treating $\bar{P}_{E}$ as a function of $\Upsilon_{m}$, we have
\begin{align}
\frac{\partial \bar{P}_{E}}{\partial \Upsilon_{m}} \!=\! \frac{1}{8}\exp\!\Big\{\!\!-\!\!\frac{1}{4}\Upsilon_{n}\!\Big\}\!\left[P_{\vartheta} \exp\!\Big\{\frac{1}{4}\Upsilon_{m}\Big\} \!-\!\left(1 \!\!-\!\! P_{\vartheta}\right)\exp\Big\{\!\!-\!\!\frac{1}{4}\Upsilon_{m}\Big\}\right].
\end{align}
By solving $\frac{\partial \bar{P}_{E}}{\partial \Upsilon_{m}}\! = \!0$, we have
\begin{align}\label{Equality}
\Upsilon_{m}\!=\!2\log\!\frac{1\!-\!P_{\vartheta}}{P_{\vartheta}}.
\end{align}
Since $\frac{\partial^{2} \bar{P}_{E}}{\partial \Upsilon_{m}^{2}} \!>\! 0$, $\bar{P}_{E}$ achieves its global minimum when \eqref{Equality} holds. Thus $\mathfrak{u}^{*}$ should satisfy $\Upsilon_{m}\left(\mathfrak{u}^{*}\right)\!=\!2\log\!\frac{1\!-\!P_{\vartheta}}{P_{\vartheta}}$. Meanwhile, it can be easily proved that $\bar{P}_{E}$ is decreasing with respect to $\Upsilon_{n}$, and thus $\Upsilon_{n}\!\left(\mathfrak{u}^{*}\right)\! \geq\! \Upsilon_{n}\!\left(\mathfrak{u}\right)$ always holds for an arbitrary $\mathfrak{u}$ that satisfies $\Upsilon_{m}\!\left(\mathfrak{u}^{*}\right)\!=\!2\log\!\frac{1\!-\!P_{\vartheta}}{P_{\vartheta}}$, i.e., $\Upsilon\!\left(\mathfrak{u}^{*}\right) \!-\!
\Upsilon_{err}\!\left(\mathfrak{u}^{*}\right) \!=\! 2\log{\frac{1 \!-\!
P_{\vartheta}}{P_{\vartheta}}}$. The proof is complete.
\end{proof}
Note that, it is hard to obtain a closed-form expression of $\mathfrak{u}^{*}$ due to the complex expressions for $\Upsilon\!\left(\mathfrak{u}\right)$ and $\Upsilon_{err}\!\left(\mathfrak{u}\right)$. However, it is clear that \eqref{EQ7} is a concave optimization problem and the constraint is an equality. Thus we can use numerical methods to effectively obtain the optimal solution with very low computational complexity.

The SLMEP algorithm is summarized in Table IV. In the first step, we need to obtain the initial estimates of $h_{a}$ and $h_{b}$ using LMMSE estimators for both $\vartheta \!=\! 0$ and $\vartheta \!=\! 1$. Then in the second step, the estimates are then updated with the SLMEP estimator.

\begin{table}
\caption{SLMEP Estimation Algorithm} \center
\begin{tabular}{p{5in}@{}}
\hline
\begin{itemize}
\item \textbf{Step 1}
\begin{itemize}
\item[-] Construct the LMMSE estimator $\mathbf{D}$ with $[\mathbf{r}_{1},\mathbf{r}_{2}]$.
\item[-] Obtain the initial estimates of $h_{a}$ and $h_{b}$ for $\vartheta = 0$ as $[\hat{h}_{a},\hat{h}_{b}]^{T} = \mathbf{D}\mathbf{x}_{S1}$.
\item[-] Construct the LMMSE estimator $\mathbf{D}^{'}$ with $[\mathbf{r}_{1}^{'},\mathbf{r}_{2}^{'}]$.
\item[-] Obtain the initial estimates of $h_{a}$ and $h_{b}$ for $\vartheta = 1$ as $[\hat{h}_{a}^{'},\hat{h}_{b}^{'}]^{T} = \mathbf{D}^{'}\mathbf{x}_{S1}$.
\end{itemize}
\item \textbf{Step 2}
\begin{itemize}
\item[-] Construct $\Upsilon(\mathfrak{u})$ and $\Upsilon_{err}(\mathfrak{u})$ with $[\hat{h}_{a},\hat{h}_{b}]$ and $[\hat{h}_{a}^{'},\hat{h}_{b}^{'}]$, respectively.
\item[-] Obtain the suboptimal estimator $\mathfrak{u}^{*}=[\mathbf{u}_{a}^{*}, \mathbf{u}_{b}^{*}]$ by \eqref{EQ7}
\item[-] Update the estimate of $h_{a}$ and $h_{b}$ as $\hat{h}_{a} =
(\mathbf{u}_{a}^{*})^{H}\mathbf{x}_{S1}$ and $\hat{h}_{b} =
(\mathbf{u}_{b}^{*})^{H}\mathbf{x}_{S1}$.
\end{itemize}
\item \textbf{Return} $\mathbf{\hat{\Theta}} = [\hat{h}_{a},\hat{h}_{b}]^{T}$.
\end{itemize}
\\
\hline
\end{tabular}
\end{table}

\section{Performance Evaluation}
In this section, we investigate the performance of the proposed estimation algorithm in the presence of synchronization error in the TWRN. The channel coefficients $h_{1}$ and $h_{2}$ are generated by independent circularly symmetric complex Gaussian random variables with zero mean and unit variances. The noise at either relay or source is assumed to be AWGN with unit variance. The symbol period $T_{s}$ is set to be of unit value, and thus the transmit power of each source node is $P_{s}$, and the total power consumption of the relay is $E_{r} \!=\! NP_{s}$. The average SNR $\varrho$ is equal to $P_{s}$ with $N_{0}\!=\!N_{R0}\!=\!N_{S0}\!=\!1$.

In Fig. \ref{fig_1}, we compare the error performance of the proposed LMEP algorithm with LMMSE estimation and the LMSNR estimation technique presented in \cite{08}. We consider a binary phase shift keying (BPSK) modulation and assume that the synchronization error $\tau$ satisfies a uniform distribution within $\left[0,NT_{s}\right]$. It can be seen that the proposed LMEP algorithm with LMMSE initialization achieves lower BER than that of both LMMSE and LMSNR estimation algorithms for a given $N$, which demonstrates the advantage of the proposed algorithm. Further, we observe that the LMEP algorithm with random initialization endures very high BER and performs much worse compared with other methods. This implies that the error performance of the LMEP algorithm differs significantly with different initializations and an inappropriately selected stating point could negatively affect the error performance.

\begin{figure}[!h]
\center
  \includegraphics[width=0.5\textwidth]{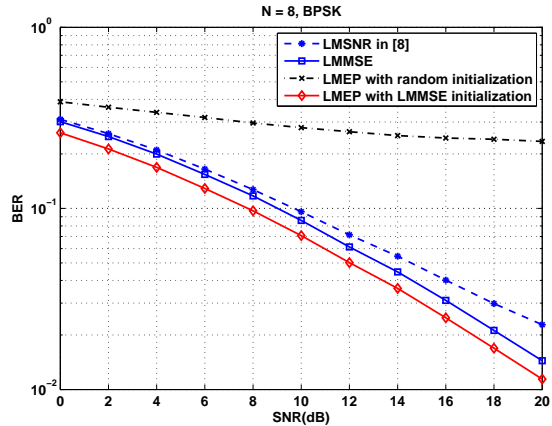}
 \caption{Simulated bit error rate versus SNR for different estimation methods.} \label{fig_1}
\end{figure}

Next let us investigate the MSE estimation of $h_{a}$ and $h_{b}$ with LMMSE initial estimation by using the following different training sequences:
\begin{itemize}
\item Type 1: $\mathbf{t}_{1}\! =\! [\underbrace{1,\ldots,1}_{N}]^{T}$ and $\mathbf{t}_{2}\! =\! [\underbrace{1,\ldots,1}_{N/2},\underbrace{-1,\ldots,-1}_{N/2}]^{T}$.
\item Type 2: The 3rd and 4th column of the $N\times N$ DFT matrix are selected as $\mathbf{t}_{1}$ and $\mathbf{t}_{2}$, respectively.
\item Optimal: Following Proposition 2, the 1st and $\left(\frac{N}{2}\!+\!1\right)$th column of the $N\times N$ DFT matrix are selected as $\mathbf{t}_{1}$ and $\mathbf{t}_{2}$, respectively.
\end{itemize}
The average MSE versus $\tau$ with different types of training sequences for $N\!=\!8,16$ is shown in Fig. \ref{fig_2}. $\varrho$ is set to be $10$dB. It can be observed that the optimal training achieves the lowest MSE among all the three types of training sequences for $\tau\!\in\![0,NT_{s}]$, and the resulting MSE with optimal training almost remains the same for different $\tau$. This is in contrast to Type 1 and 2, for which the MSE varies significantly with $\tau$. This is because the resulting MSE is mainly determined by $|\rho\!\left(\tau\right)|$, and the optimal training aims at minimizing the maximum $|\rho\left(\tau\right)|$ for all $\tau$ values.

\begin{figure}[!h]
\center
  \includegraphics[width=0.5\textwidth]{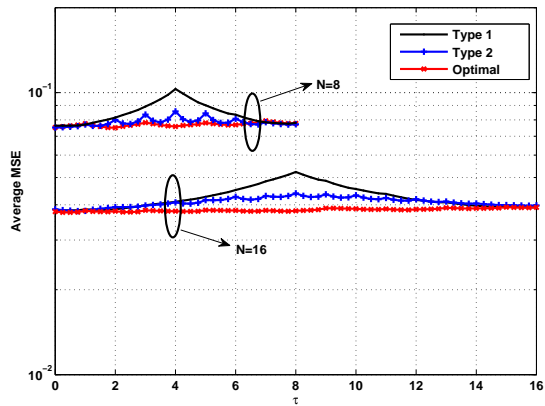}
 \caption{MSE versus $\tau$ for different training sequence selections.} \label{fig_2}
\end{figure}

Next let us compare the MSE performance of channel estimation with optimal training sequences to that with training sequence consisting of $N$ random quadrature-phase-shift-keying (QPSK) symbols. In accordance with the curves in Fig. \ref{fig_2_2}, we can see that the MSE with optimal training sequences is considerably lower than that constructed of random QPSK symbols. This is because the latter cannot ensure the low correlation between the two training sequences, which would negatively affect the estimation performance, leading to an increased MSE.

\begin{figure}[!h]
\center
  \includegraphics[width=0.5\textwidth]{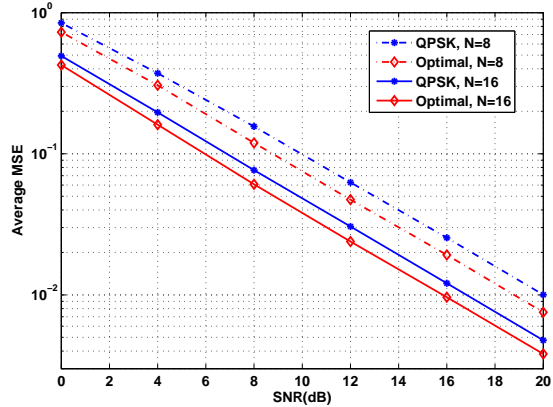}
 \caption{MSE versus SNR for different training sequence selections.} \label{fig_2_2}
\end{figure}

Fig.\ref{fig_3} illustrates the MSEs versus SNR for different power allocation schemes. Here, we assume that $\tau$ follows a uniform distribution within $\left[0,NT_{s}\right]$. We compare the proposed sub-optimal pilot power allocation (SOA) with the random power allocation (RA), where $\gamma_{I}^{2}$ is uniformly selected within $\left[0, \frac{E_{r}}{\left(NT_{s}\! -\! {\tau}\right)P_{b}}\right]$, and the equal power allocation (EA) where $\gamma_{S} \!=\! \gamma_{I}$. We can observe that the SOA and EA achieve lower MSE than that of RA. Fig. \ref{fig_3_1} compares the BER performance of the three power allocation schemes. It can be seen that the SOA scheme performs the best among all the three power allocation schemes. We can also see that SOA and EA achieve almost the same performance over a wide range of SNR. This result validates the statement presented in Proposition 3. In Fig.\ref{newFig}, we investigate the MSE performance versus the training sequence length $N$, and the parameters are set as SNR$=0$dB, $\tau\!=\!0.5T_{s}$. The optimal training refers to the sequence derived in proposition 3, while the worst training refers to the simulation in which $\mathbf{t}_{1}$ and $\mathbf{t}_{2}$ are fully correlated. It can be noted from the figure that the average MSE performance with the optimal training decreases quickly as $N$ increases. With the worst training sequence, an MSE error floor will occur and the corresponding MSE converges to the asymptotic result in \eqref{newEQ3} for a sufficiently large $N$($>40$). This validates the consistency between analysis and numerical results.

\begin{figure}[!h]
\center
  \includegraphics[width=0.5\textwidth]{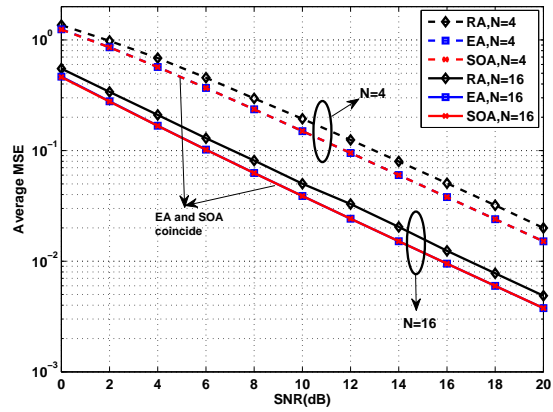}
 \caption{MSE versus SNR for different power allocation schemes.} \label{fig_3}
\end{figure}

\begin{figure}[!h]
\center
  \includegraphics[width=0.5\textwidth]{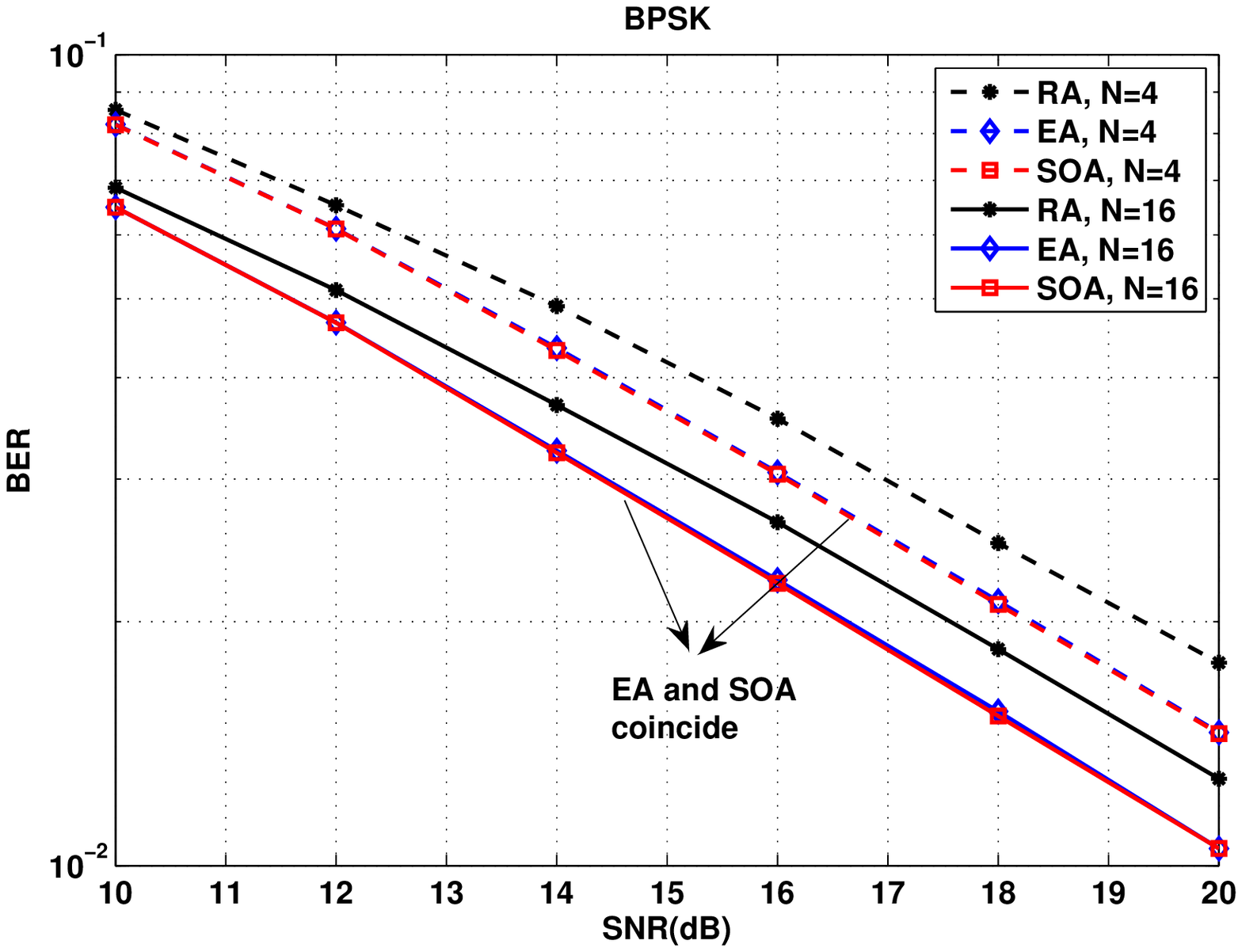}
 \caption{BER versus SNR for different power allocation schemes.} \label{fig_3_1}
\end{figure}

\begin{figure}[!h]
\center
  \includegraphics[width=0.5\textwidth]{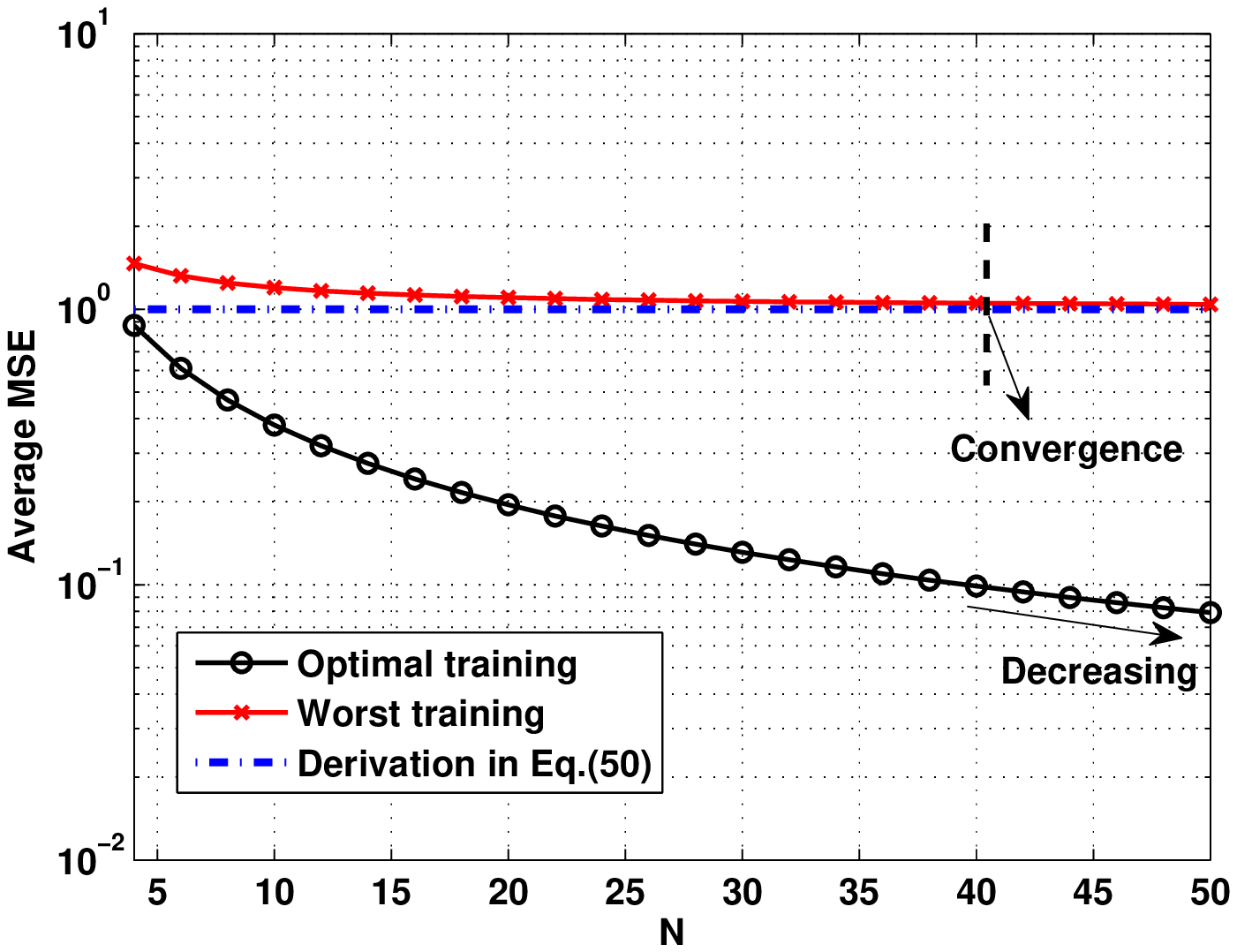}
 \caption{MSE versus $N$ for different training sequences.} \label{newFig}
\end{figure}

Next, let us evaluate the error probability of SAO detection denoted by $P_{\vartheta}$ with joint estimation of $h_{a}$ and $h_{b}$ by using the GLRT method. The error probabilities of SAO detection at the relay and at the source are both displayed versus $\tau$ in Fig.\ref{fig_4}. It can be noted that $P_{\vartheta}$ quickly decreases with SNR, which demonstrates the effectiveness of the GLRT solution. Recalling \eqref{EP1}, it can be easily seen that $P_{\vartheta}$ decreases with respect to $\tau$ when $\lambda = 0$. When focusing on the points $\tau \!=\! n_{\tau}T_{s}$, we can see that $P_{\vartheta}$ monotonously decreases as $n_{\tau}$ increases. This is consistent with the analytical result. We can also see that $P_{\vartheta}$ also decreases with respect to $n_{\tau}$ for a given $\lambda$. This is also consistent with the upper bound of $P_{\vartheta}$ in \eqref{EP1}.

\begin{figure}[!h]
\center
  \includegraphics[width=0.5\textwidth]{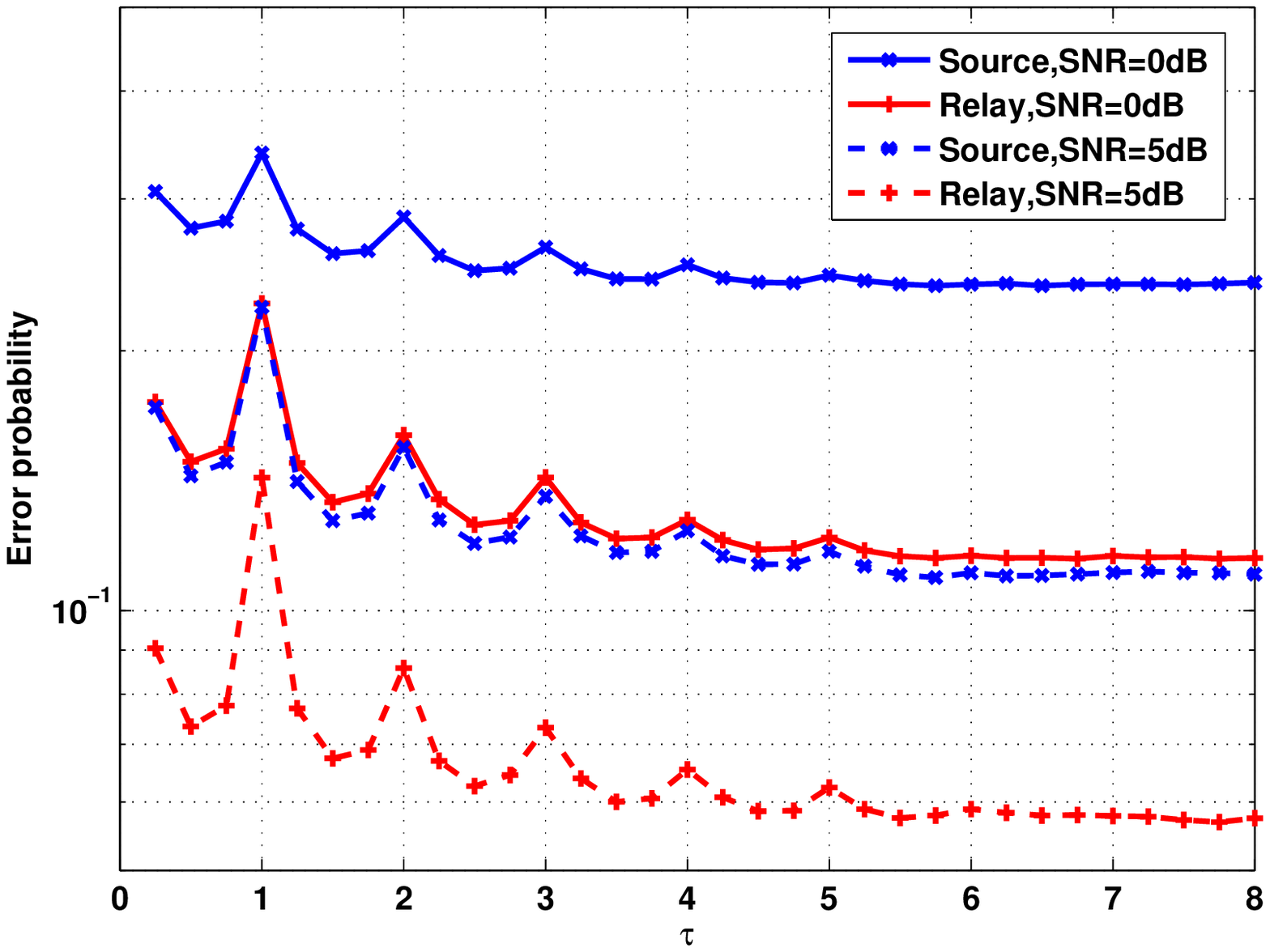}
 \caption{Error probability of SAO detection versus $\tau$.} \label{fig_4}
\end{figure}

Finally, we evaluate the performance of the SLMEP algorithm and compare it with the LMEP solution with the practical SAO detection. In Fig.\ref{fig_5}, we display the BER versus $P_{\vartheta}$ with fixed $\tau\! =\! 1$. It can be noted that the BER of LMEP increases with respect to $P_{\vartheta}$. This indicates that the SAO detection error significantly degrades the system performance. We can also observe that the SLMEP algorithm attains almost the same BER for different $P_{\vartheta}$ and achieves lower BER than that of the LMEP method when $P_{\vartheta}$ is greater than a threshold, but the performance gain degrades as $P_{\vartheta}$ decreases. This corroborates that the SLMEP estimation can compensate for the performance degradation caused by SAO detection error, especially when $P_{\vartheta}$ is high.

\begin{figure}[!h]
\center
  \includegraphics[width=0.5\textwidth]{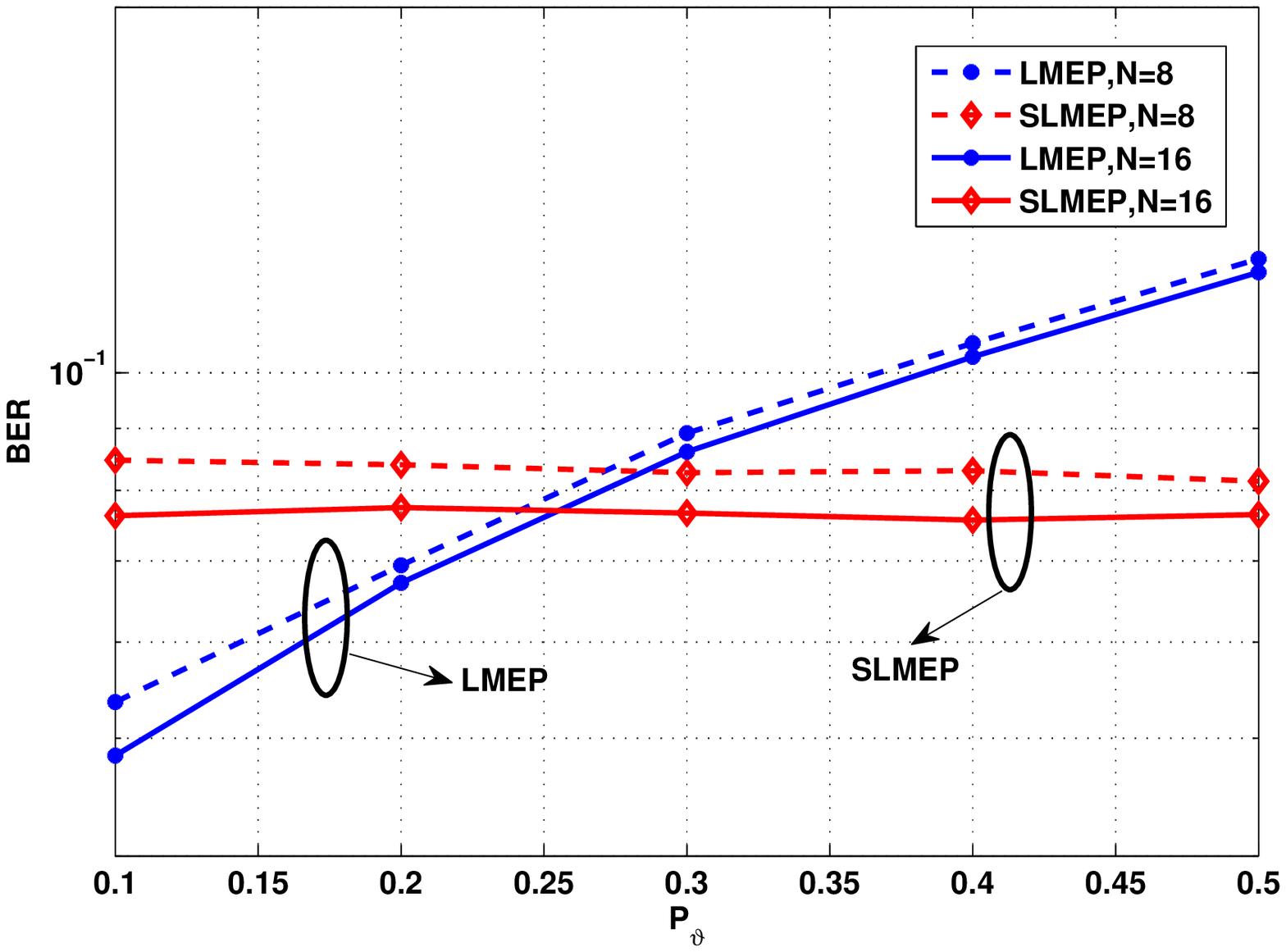}
 \caption{Bit error rate versus $P_{\vartheta}$ for different estimation algorithms.} \label{fig_5}
\end{figure}

\section{Conclusions}
In this paper, we have investigated the channel estimation problem in TWRNs for which the two source nodes are not perfectly synchronized with each other. We have proposed a two-step estimation algorithm with initial LMMSE channel estimation, followed by LMEP estimation to minimize the error probability of coherent reception. We have also designed optimal training sequences and power allocation at the relay to further improve the LMMSE initial estimation. We have further presented the GLRT method to overcome the SAO ambiguity, and an upper bound on its error probability has been derived. A scaled LMEP estimation algorithm has also been proposed to compensate for the degradation of channel estimation performance caused by SAO detection error. Simulation results have demonstrated that the proposed algorithm can effectively overcome the negative effects of synchronization error and SAO ambiguity, and significantly improve the system performance compared to the existing LMMSE and LMSNR estimation algorithms.

\appendices
\section{Proof of Proposition 1}
Let us select the $k_{i}$-th column of the ${N\!\times\! N}$ DFT matrix as $\mathbf{t}_{i}$, i.e., $\mathbf{t}_{i}\!\! =\!\! \left[1,e^{\!-\!j\phi_{i}},\ldots,e^{\!-\!j\left(N \!-\! 1\right)\phi_{i}}\right]^{T}$ with $\phi_{i}\!=\!\frac{2\pi}{N}\left(k_{i}-1\right)$. The magnitude of $\rho$ for a given ${\tau}$ is given by
\begin{align}
|\rho\left({\tau}\right)| \!&=\! \frac{1}{N}\Bigg|\frac{1\! -\!
e^{j\left(\phi_{1} \!-\! \phi_{2}\right)\left(N \!-\! n_{\tau}\right)}}{1 \!-\!
e^{j\left(\phi_{1} \!-\!\phi_{2}\right)}}e^{jn_{\tau}\phi_{1}}\frac{T_{s}\!-\!\lambda}{T_{s}}\!+\!
\frac{1 \!-\! e^{j\left(\phi_{1} \!-\! \phi_{2}\right)\left(N \!-\! n_{\tau} \!-\!1\right)}}{1 \!-\! e^{j\left(\phi_{1} \!-\!
\phi_{2}\right)}}e^{j\left(n_{\tau}\!+\!1\right)\phi_{1}}\frac{\lambda}{T_{s}}
\Bigg|.\nonumber\\
\!&\leq\! \frac{1}{N}\bigg\{\frac{T_{s}\! -\! \lambda}{T_{s}}\Big|\frac{1 \!-\!
e^{j\left(\phi_{1} \!-\! \phi_{2}\right)\left(N \!-\! n_{\tau}\right)}}{1 \!-\!
e^{j\left(\phi_{1}\! -\! \phi_{2}\right)}}\Big| \!+\!
\frac{\lambda}{T_{s}}\Big|\frac{1 \!-\! e^{j\left(\phi_{1} \!-\!
\phi_{2}\right)\left(N \!-\! n_{\tau} \!-\!1 \right)}}{1\! -\!
e^{j\left(\phi_{1} \!-\! \phi_{2}\right)}}\Big|\bigg\}.\nonumber
\end{align}
With a sufficiently large $N$ and a finite $\tau$, we have
\begin{align}\nonumber
\Big|\frac{1\! -\! e^{j\left(\phi_{1} \!-\! \phi_{2}\right)\left(N \!-\! n_{\tau}\right)}}{1 \!-\! e^{j\left(\phi_{1} \!-\! \phi_{2}\right)}}\Big| \!\leq\! 1, \Big|\frac{1\! -\! e^{j\left(\phi_{1}\! -\! \phi_{2}\right)\left(N\! -\! n_{\tau} \!-\! 1\right)}}{1 \!-\! e^{j\left(\phi_{1}\! -\! \phi_{2}\right)}}\Big| \!\leq\! 1
\end{align}
which leads to
\begin{align}\nonumber
|\rho\left(\tau\right)|\!\leq\! \frac{1}{N}.
\end{align}
Furthermore, in the limit we have
\begin{align}\nonumber
\lim_{N\rightarrow\infty}|\rho\left(\tau\right)|\!=\!
\lim_{N\rightarrow\infty}\mathcal{O}\!\left(\frac{1}{N}\right)\! =\!0.
\end{align}
This completes the proof.

\section{Proof of Proposition 2}
Recalling Appendix A, for arbitrary $k_{1}$ and $k_{2}$,
$|\rho\left(\tau\right)|$ satisfies
\begin{equation}
\min \max_{{\tau}\in\left[0, NT_{s}\right]}|\rho\left({\tau}\right)| \!\geq\! |\rho\left[\left(N \!-\!1\right)T_{s}\right]| \!=\! \frac{1}{N}.\nonumber
\end{equation}
Thus we have
\begin{align}
\max_{{\tau}\in\left[0, NT_{s}\right]} |\rho\left({\tau}\right)|
\!\geq\! \frac{1}{N}.\nonumber
\end{align}
Focusing on the case in which $N$ is even, $|\rho\left({\tau}\right)|$
can be simplified by substituting $k_{2} - k_{1} = \frac{N}{2}$ into
it to yield
\begin{align}
|\rho\left({\tau}\right)|\!&=\!\frac{1}{2N}\Bigg|\frac{T_{s} \!-\!\lambda}{T_{s}}{\left[1\! -\! e^{j\left(\phi_{1}\! -\!\phi_{2}\right)\left(N\! -\! n_{\tau}\right)}\right]} \!+\!
\frac{\lambda}{T_{s}}{\left[1\! -\! e^{j\left(\phi_{1}\! -\!\phi_{2}\right)\left(N \!-\! n_{\tau} \!-\!1\right)}\right]}\Bigg|\nonumber\\
\!&\leq\!\frac{1}{2N}\bigg\{\frac{T_{s} \!-\! \lambda}{T_{s}}{\Big|1\! -\!e^{j\left(\phi_{1}\! -\! \phi_{2}\right)\left(N \!-\! n_{\tau}\right)}\Big|}\!+\! \frac{\lambda}{T_{s}}{\Big|1 \!-\!e^{j\left(\phi_{1}\! -\!\phi_{2}\right)\left(N \!-\! n_{\tau}\! -\!1\right)}\Big|}\bigg\}\!\leq\! \frac{1}{N}.\nonumber
\end{align}
In this case, the maximum value of $|\rho\left(\tau\right)|$
satisfies
\begin{align}
\max_{{\tau}\in\left[0, NT_{s}\right]} |\rho\left({\tau}\right)|
\!\leq\! \frac{1}{N},\nonumber
\end{align}
which indicates that $k_{2}\!-\!k_{1} \!=\! \frac{N}{2}$ is the optimal
design when $N$ is even. For the case in which $N$ is odd, substituting
$k_{2}\! - \!k_{1} \!=\! \frac{N\pm1}{2}$ into $|\rho\left(\tau\right)|$ yields
\begin{align}
|\rho\left({\tau}\right)|\!\leq\! \frac{1}{N}\bigg\{\frac{T_{s} \!-\!\lambda}{T_{s}}\Big|\frac{1\! -\! e^{\!-\!j\pi n_{\tau} \pm j \pi
\frac{n_{\tau}}{N}}}{1 \!+\! e^{\pm j\pi\frac{1}{N}}}\Big| \!+\!
\frac{\lambda}{T_{s}}\Big|\frac{1 \!-\! e^{-j\pi n_{\tau} \pm j \pi
\frac{n_{\tau}\!+\!1}{N}}}{1 \!+\! e^{\pm j\pi\frac{1}{N}}}\Big|\bigg\}.
\nonumber
\end{align}
When $n_{\tau}$ is odd, we have
\begin{align}
|\rho\left({\tau}\right)|\!\leq\!\frac{1}{N}\left[\frac{T_{s}\! -\!
\lambda}{T_{s}}\Big|\frac{1 \!+\! e^{\pm j \pi \frac{n_{\tau}}{N}}}{1 \!+\!
e^{\pm j\pi\frac{1}{N}}}\Big| \!+\! \frac{\lambda}{T_{s}}\Big|\frac{1 \!+\!
e^{\pm j \pi \frac{n_{\tau}\!+\!1}{N}}}{1 \!+\! e^{\pm
j\pi\frac{1}{N}}}\Big|\right]\!\leq\! |\rho\left({T_{s}}\right)|\!=\!\frac{1}{N}.\nonumber
\end{align}
When $n_{\tau}$ is even, we have
\begin{align}
|\rho\left({\tau}\right)|\!\leq\!\frac{1}{N}\left[\frac{T_{s}\! -\!\lambda}{T_{s}}\Big|\frac{1 \!-\! e^{\pm j \pi \frac{n_{\tau}}{N}}}{1 \!+\!e^{\pm j\pi\frac{1}{N}}}\Big| \!+\! \frac{\lambda}{T_{s}}\Big|\frac{1 \!-\!e^{\pm j \pi \frac{n_{\tau}\!+\!1}{N}}}{1 \!+\! e^{\pm
j\pi\frac{1}{N}}}\Big|\right]\!\leq\! |\rho\left(\left(N \!-\!1\right)T_{s}\right)|\!=\!\frac{1}{N}.\nonumber
\end{align}
Then, $|\rho\left(\tau\right)|$ satisfies
\begin{align}
\max_{{\tau}\in\left[0, NT_{s}\right]} |\rho\left({\tau}\right)|
\!\leq\! \frac{1}{N}\nonumber
\end{align}
for arbitrary $\tau$. Therefore, we can conclude that $k_{2} \!-\! k_{1}\!
=\! \frac{N\pm1}{2}$ is the optimal design for odd $N$. This completes the proof.

\section{Proof of Lemma 1}
Let us selecting the $k_{i}$-th column of the ${N\!\times\! N}$ DFT matrix as $\mathbf{t}_{i}$, i.e., $\mathbf{t}_{i} \!=\! \left[1,e^{\!-\!j\phi_{i}},\ldots,e^{\!-\!j\left(N \!-\!
1\right)\phi_{i}}\right]^{T}$ with $\phi_{i}= \frac{2\pi}{N}\left(k_{i}-1\right)$. After some mathematical calculations, the term $\mathbf{T}^{H}_{\mathcal{H}_{0}}\mathbf{T}_{\mathcal{H}_{0}}$ with a given ${\tau}$ can be calculated as
\begin{align}\nonumber
\mathbf{T}^{H}_{\mathcal{H}_{0}}\mathbf{T}_{\mathcal{H}_{0}}\! =\!
\left[\begin{array}{cc} N & \Psi_{1}\left(\tau\right) \\
\Psi_{1}^{*}\left(\tau\right) & N\end{array}\right],
\end{align}
where
\begin{align}
\Psi_{1}\!\left(\tau\right) \!=\! \frac{T_{s} \!-\!\lambda}{T_{s}}e^{j{n_{\tau}}\phi_{1}}\sum_{i = 1}^{N \!-\!n_{\tau}}{e^{j\left(\phi_{1} \!-\! \phi_{2} \right)\left(i \!- \!1\right)}}
\!+\! \frac{\lambda}{T_{s}}e^{j\left(n_{\tau} \!+\! 1\right)\phi_{1}}\sum_{i
= 1}^{N \!-\!n_{\tau} \!-\! 1}{e^{j\left(\phi_{1}\! -\! \phi_{2} \right)\left(i\!-\!1\right)}}.\nonumber
\end{align}
And $\mathbf{T}^{H}_{\mathcal{H}_{0}}\mathbf{T}_{\mathcal{H}_{1}}$
can be calculated by
\begin{align}\nonumber
\mathbf{T}^{H}_{\mathcal{H}_{0}}\mathbf{T}_{\mathcal{H}_{1}} \!=\!
\left[\begin{array}{cc} \Psi_{2}\left(\tau,\phi_{1}\right) & 0\\0 &
\Psi_{2}\left(\tau,-\phi_{2}\right)\end{array}\right],
\end{align}
where
\begin{align}\nonumber
\Psi_{2}\!\left(\tau, \phi\right) \!=\! \frac{T_{s} \!-\!
\lambda}{T_{s}}\left(N\! -\! n_{\tau}\right)e^{j{n_{\tau}}\phi} \!+\!
\frac{\lambda}{T_{s}}\left(N \!-\!n_{\tau} \!-\! 1\right)e^{j\left(n_{\tau}
\!+\! 1\right)\phi}.
\end{align}
It should be noted that the error probability of SAO detection depends on the selected training sequences $\mathbf{t}_{i}$. Consider the optimal training sequences in Proposition 3; we have
\begin{align}\nonumber
|\Psi_{1}\!\left(\tau\right)|\!\leq\! \frac{\lambda}{T_{s}},
|\Psi_{2}\!\left(\tau, \phi\right)|\!\leq\! N \!-\! \frac{\tau}{T_{s}}.
\end{align}
Recalling \eqref{EQ10}, we have
\begin{align}\nonumber
d_{\mathcal{H}_{0}}\! \geq\!\|\mathbf{h}\|^{2}\varsigma\left({\tau}\right)N \!+\!
2\mathfrak{R}\{h_{1}^{H}h_{2}\varsigma\left({\tau}\right)\} \!=\!
d\left(\mathbf{h}|\mathcal{H}_{0}\right)
\end{align}
where
\begin{align}
\varsigma\left({\tau}\right) \!&=\! 1 \!-\!\frac{\left(N \!-\!\frac{\tau}{T_{s}}\right)^{2}}{N^{2}\! -\!
\left(\frac{\lambda}{T_{s}}\right)^{2}}.\nonumber
\end{align}
It can be verified that $d\left(\mathbf{h}|\mathcal{H}_{0}\right)$ is positive for arbitrary non-zero $\mathbf{h}$. Similarly, we can also show that $d\left(\mathbf{h}|\mathcal{H}_{1}\right) \!>\! 0$ for all non-zero $\mathbf{h}$. For a sufficiently large $N$, the first term in the right side of $d_{\mathcal{H}_{\vartheta}}$ is much larger than the second term, and thus the second term can be neglected. This completes the proof.

\section{Proof of Lemma 2}
By observing the term $\mathbf{T}^{H}_{\mathcal{H}_{\vartheta}}\mathbf{T}_{\mathcal{H}_{\vartheta}}$
in $\mathbf{Z}_{\mathcal{H}_{\vartheta}}$, we can see that it can be treated as a diagonally dominant matrix when $N$ is sufficiently large. Thus its inverse can be approximately obtained by $\frac{1}{N}\mathbf{I}_{2}$, and $\tilde{\mathbf{n}}_{\mathcal{H}_{0}}$ can be simplified as
\begin{align}\nonumber
\tilde{\mathbf{n}}_{\mathcal{H}_{0}} \!&\approx\!
\mathbf{h}^{H}\mathbf{T}^{H}_{\mathcal{H}_{0}}\mathbf{n}_{R} \!+\!
\mathbf{n}^{H}_{R}\mathbf{T}_{\mathcal{H}_{0}}\mathbf{h} \!+\!
\frac{1}{N}\left(\mathbf{n}^{H}_{R}\mathbf{T}_{\mathcal{H}_{0}}\mathbf{T}^{H}_{\mathcal{H}_{0}}\mathbf{n}_{R}\!-\!
\mathbf{n}^{H}_{R}\mathbf{T}_{\mathcal{H}_{1}}\mathbf{T}^{H}_{\mathcal{H}_{1}}\mathbf{n}_{R}\right)\nonumber\\
\!&-\!\frac{1}{N}\left(\mathbf{h}^{H}\mathbf{T}^{H}_{\mathcal{H}_{0}}\mathbf{T}_{\mathcal{H}_{1}}\mathbf{T}^{H}_{\mathcal{H}_{1}}\mathbf{n}_{R}
\!+\!
\mathbf{n}^{H}_{R}\mathbf{T}_{\mathcal{H}_{1}}\mathbf{T}^{H}_{\mathcal{H}_{1}}\mathbf{T}_{\mathcal{H}_{0}}\mathbf{h}\right)\nonumber.
\end{align}
Note that the variance of the second term of $\tilde{\mathbf{n}}_{\mathcal{H}_{0}}$ is of order $\mathcal{O}\!\left(\left(\frac{N_{0}}{P_{s}T_{s}}\right)^{2}\right)$, while the first and third terms are both of order $\mathcal{O}\!\left(\frac{N_{0}}{P_{s}T_{s}}\right)$. In the high SNR region, we need to consider only the first and third terms. Then $\tilde{\mathbf{n}}_{\mathcal{H}_{0}}$ simplifies to
\begin{align}\nonumber
\tilde{\mathbf{n}}_{\mathcal{H}_{0}}\!\approx\!
2\mathfrak{R}\big\{\mathbf{h}^{H}\mathbf{T}^{H}_{\mathcal{H}_{0}}\mathbf{n}_{R}\big\}
\!-\!\frac{2}{NP_{S}}\mathfrak{R}\big\{\mathbf{h}^{H}\mathbf{T}^{H}_{\mathcal{H}_{0}}\mathbf{T}_{\mathcal{H}_{1}}\mathbf{T}^{H}_{\mathcal{H}_{1}}\mathbf{n}_{R}\big\}.\nonumber
\end{align}
After some manipulation, the above equation can be further written as
\begin{align}\nonumber
\tilde{\mathbf{n}}_{\mathcal{H}_{0}} \!&\approx\! 2\mathfrak{R}
\bigg\{\sum_{k =
1}^{n_{\tau}}\eta_{1}\!\left(k\right)n_{R}\!\left[k\right] \!+\!
\sqrt{\frac{\lambda}{T_{s}}}\eta_{1}\!\left(n_{\tau}\!+\!1\right)n_{R}\!\left[n_{\tau}\!+\!1\right]
\!+\! \sqrt{\frac{\lambda}{T_{s}}}\eta_{2}\!\left(2N\!
-\!n_{\tau}\!+\!1\right)n_{R}\!\left[2N \!-\! n_{\tau}\!+\!1\right] \nonumber\\\!&+\!\!\!\!\!\!\sum_{k = 2N\!-\!n_{\tau} \!+\! 2}^{2N \!+\!1}\!\!\!\!\!\!\!\eta_{2}\!\left(k\right)n_{R}\!\left[k\right]
\!+\!\!\!\!\!\!\!\!\sum_{i\!=\!1,k\!=\!n_{\tau}\!+\!2i}^{i\!=\!N\!-\!n_{\tau}}\!\!\sqrt{\frac{\lambda}{T_{s}}}\left[\eta_{1}\!\left(k\right)
\!+\!\eta_{2}\!\left(k\right)\right]n_{R}\!\left[k\right] \!+\!\!\!\!\!\!\!\!\sum_{i\!=\!1,k\!=\!n_{\tau}\!+\!2i\!+\!1}^{i\!=\!N\!-\!n_{\tau}\!-\!1}\sqrt{\frac{T_{s}\! -\!\lambda}{T_{s}}}\left[\eta_{1}\!\left(k\right)\!+\!\eta_{2}\!\left(k\right)\right]n_{R}\!\left[k\right]\bigg\}\nonumber
\end{align}
with
\begin{align}\nonumber
\eta_{1}\!\left(k\right) \!=\!e^{j\phi_{1}\left(k \!-\! 1\right)} h_{1}^{H}\!-\!\frac{\Psi_{2}\!\left(\tau, \phi_{2}\right)}{N}h_{2}^{H},
\eta_{2}\!\left(k\right)\! =\!e^{j\phi_{2}\left(k \!-\! n_{\tau} \!-\!1\right)}h_{2}^{H}\!-\! \frac{\Psi_{2}\!\left(\tau, \phi_{1}\right)}{N}h_{1}^{H}.
\end{align}
With the optimal training sequences, we have
\begin{align}\nonumber
|\eta_{1}\!\left(k\right)|^{2}\!&\leq\! \left(|h_{1}|^{2} \!+\!
\left(1 \!-\! \frac{\tau}{NT_{s}}\right)^{2}|h_{2}|^{2}\right),\\
|\eta_{2}\!\left(k\right)|^{2}\!&\leq\! \left(|h_{2}|^{2} \!+\! \left(1 \!-\!
\frac{\tau}{NT_{s}}\right)^{2}|h_{1}|^{2}\right),\nonumber
\end{align}
\begin{align}\nonumber
|\eta_{1}\!\left(k\right)\!+\!\eta_{2}\!\left(k\right)|^{2}\!\leq\!
\left(\frac{\tau}{NT_{s}}\right)^{2}\left(|h_{1}|^{2} \!+\!
|h_{2}|^{2}\right).\nonumber
\end{align}
Clearly, it can be observed that $\tilde{n}_{E}$ has the real Gaussian distribution with zero mean, and we have
\begin{align}\nonumber
\upsilon_{\mathcal{H}_{0}}\! \leq\! 2\left(2 \!-\!\frac{\tau}{NT_{s}}\right)\frac{\tau}{T_{s}}\frac{N_{0}}{P_{s}T_{s}}\left(|h_{1}|^{2}
\!+\! |h_{2}|^{2}\right),
\end{align}
and similarly for $\upsilon_{\mathcal{H}_{1}}$. This completes the proof.

\begin{IEEEbiography}[{\includegraphics[width=1in,height=1.25in]{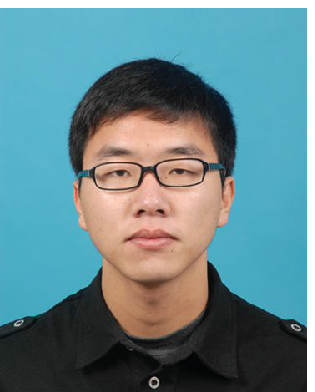}}]{Xinqian Xie}
received the B.S. degree in telecommunication engineering from Beijing University of Posts
and Communications (BUPT), China, in 2010. He is currently pursuing the Ph.D. degree in Wireless
Signal Processing and Networks Lab at BUPT. His research interests include cooperative communications, estimation and detection theory.
\end{IEEEbiography}

\begin{IEEEbiography}[{\includegraphics[width=1in,height=1.25in]{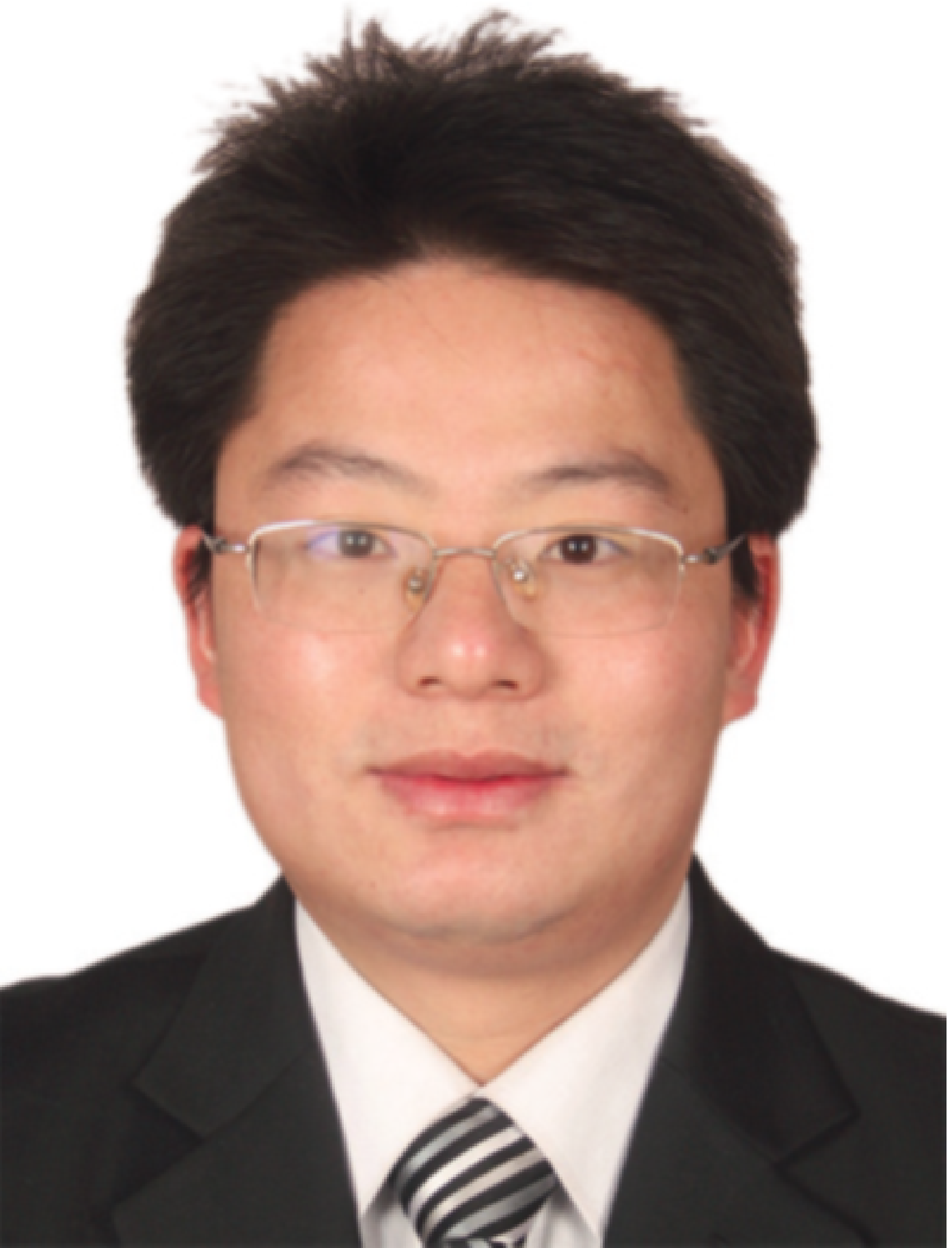}}]{Mugen Peng}
(M'05--SM'11) received the B.E. degree in Electronics Engineering
from Nanjing University of Posts \& Telecommunications, China in
2000 and a PhD degree in Communication and Information System from
the Beijing University of Posts \& Telecommunications (BUPT), China
in 2005. After the PhD graduation, he joined in BUPT, and has become
a full professor with the school of information and communication
engineering in BUPT since Oct. 2012. During 2014, he is also an
academic visiting fellow in Princeton University, USA. He is leading
a research group focusing on wireless transmission and networking
technologies in the Key Laboratory of Universal Wireless
Communications (Ministry of Education) at BUPT, China. His main
research areas include wireless communication theory, radio signal
processing and convex optimizations, with particular interests in
cooperative communication, radio network coding, self-organization
networking, heterogeneous networking, and cloud communication. He
has authored/coauthored over 40 refereed IEEE journal papers and
over 200 conference proceeding papers.

Dr. Peng is currently on the Editorial/Associate Editorial Board of
IEEE Access, International Journal of Antennas and Propagation
(IJAP), China Communication, and International Journal of
Communications System (IJCS). He has been the guest leading editor
for the special issues in IEEE Wireless Communications, IJAP and the
International Journal of Distributed Sensor Net- works (IJDSN). He
was the guest editor of IET Communications. He is serving as the
track chair for GameNets 2014, WCSP 2013 and ICCT 2011, the workshop
co-chair of ChinaCom 2012, and the leading co-chair for So-HetNets
in IEEE WCNC 2014, SON-HetNet 2013 in IEEE PIMRC 2013, and SON 2013
in ChinaCom 2012. Dr. Peng was honored with the Best Paper Award in
CIT 2014, ICCTA 2011, IC-BNMT 2010, and IET CCWMC 2009. He was
awarded the first Grade Award of Technological Invention Award in
Ministry of Education of China for his excellent research work on
the hierarchical cooperative communication theory and technologies,
and the Second Grade Award of Scientific \& Technical Progress from
China Institute of Communications for his excellent research work on
the co-existence of multi-radio access networks and the 3G spectrum
management in China.
\end{IEEEbiography}

\begin{IEEEbiography}[{\includegraphics[width=1in,height=1.25in]{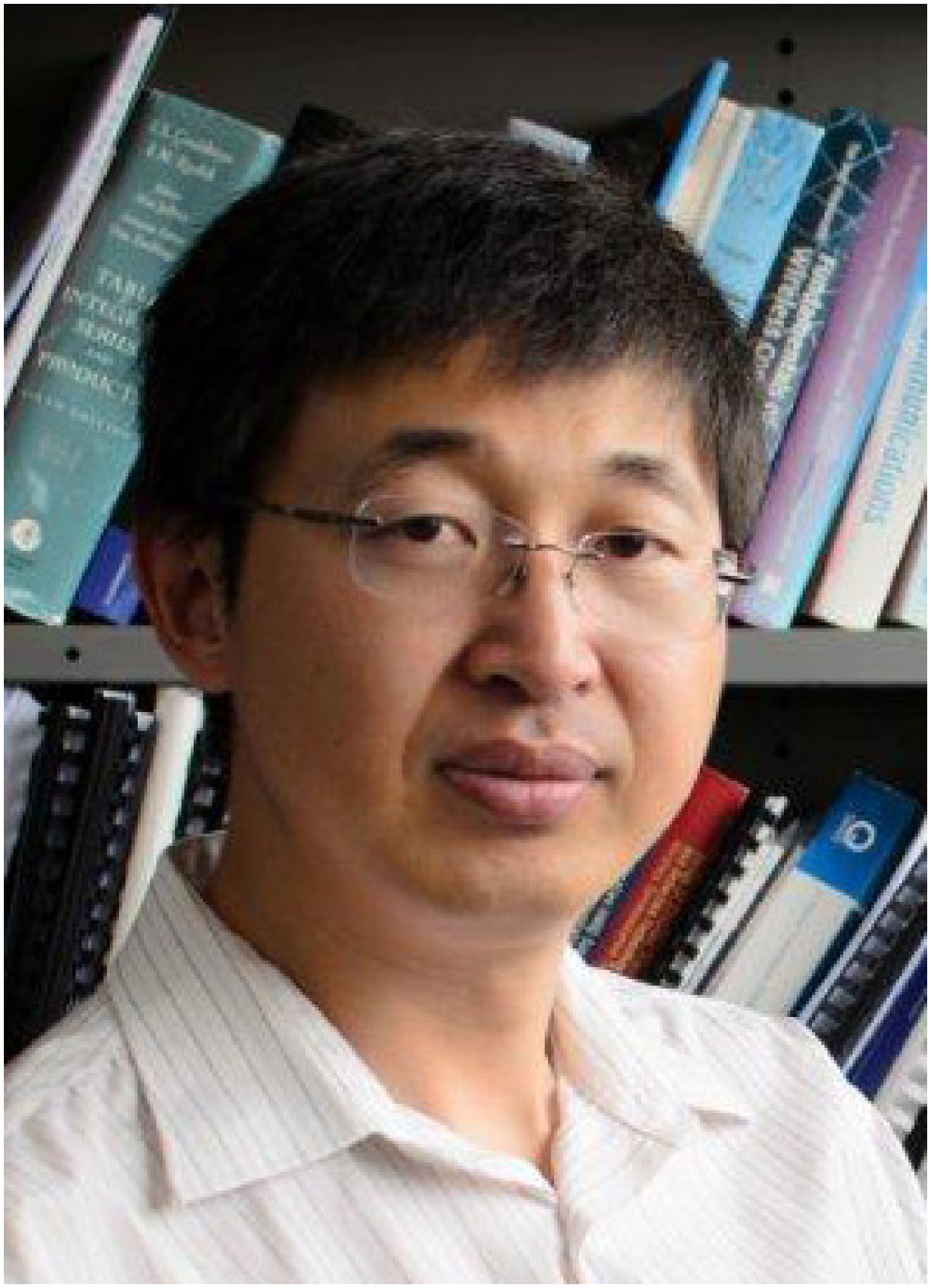}}]{Yonghui Li}
(M'04-SM'09) received his PhD degree in November 2002 from Beijing University of Aeronautics and Astronautics. From 1999-2003, he was affiliated with Linkair Communication Inc, where he held a position of project manager with responsibility for the design of physical layer solutions for the LAS-CDMA system. Since 2003, he has been with the Centre of Excellence in Telecommunications, the University of Sydney, Australia. He is now an Associate Professor in School of Electrical and Information Engineering, University of Sydney. He was the Australian Queen Elizabeth II Fellow and is currently the Australian Future Fellow.
His current research interests are in the area of wireless communications, with a particular focus on MIMO, cooperative communications, coding techniques and wireless sensor networks. He holds a number of patents granted and pending in these fields. He is an executive editor for European Transactions on Telecommunications (ETT). He has also been involved in the technical committee of several international conferences, such as ICC, Globecom, etc.
\end{IEEEbiography}

\begin{IEEEbiography}[{\includegraphics[width=1in,height=1.25in]{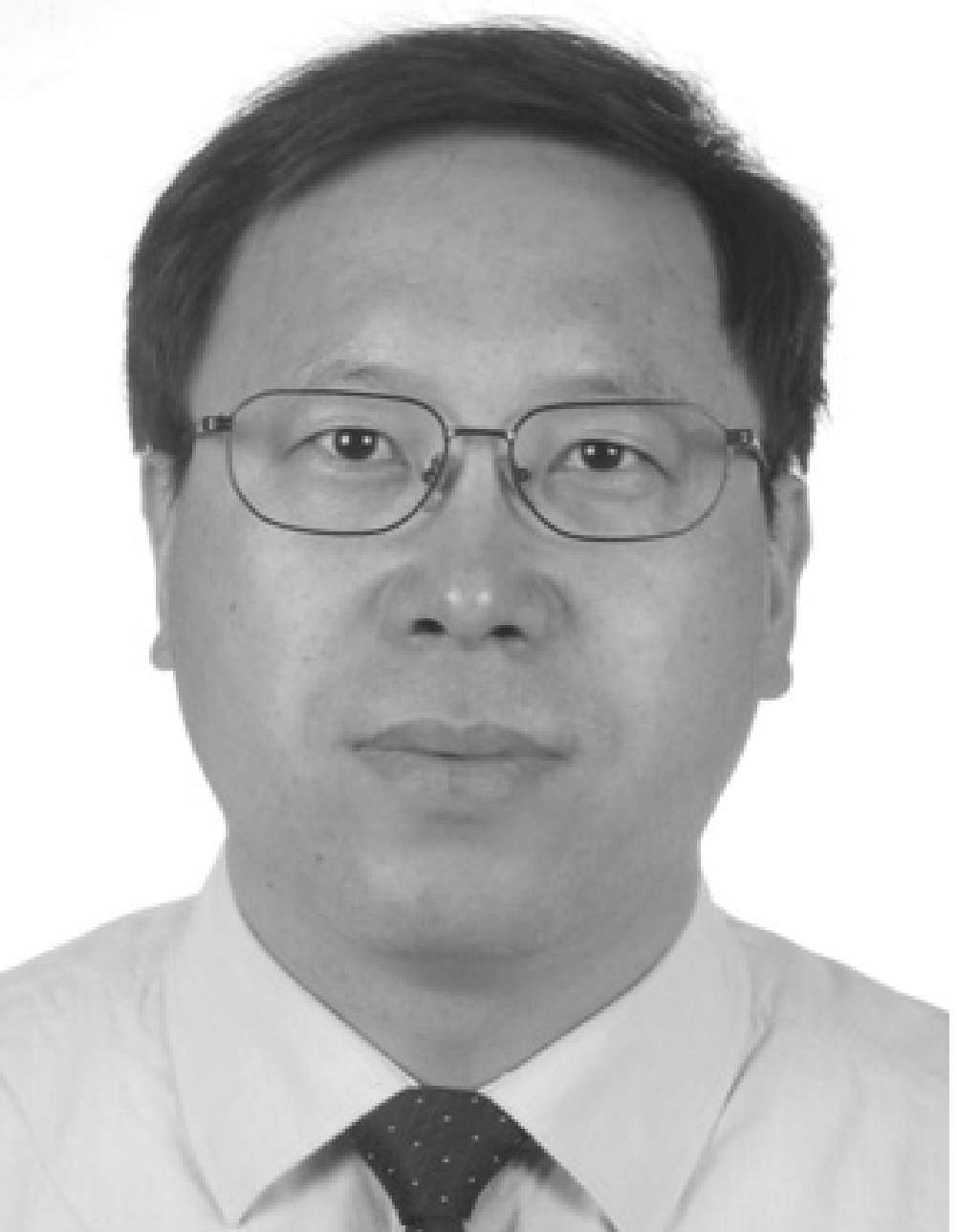}}]{Wenbo Wang}
is currently the dean of Telecommunication Engineering at Beijing University of Posts and Telecommunications (BUPT). He received the BS degree, the MS and Ph.D. Degrees from BUPT in 1986, 1989 and 1992 respectively. Now he is the Assistant Director of academic committee of Key Laboratory of Universal Wireless Communication (Ministry of Education) in BUPT. His research interests include radio transmission technology, Wireless network theory, Broadband wireless access and Software radio technology. Prof. Wenbo Wang has published more than 200 journal and international conference papers and holds 12 patents and has published six books.
\end{IEEEbiography}

\begin{IEEEbiography}[{\includegraphics[width=1in,height=1.25in]{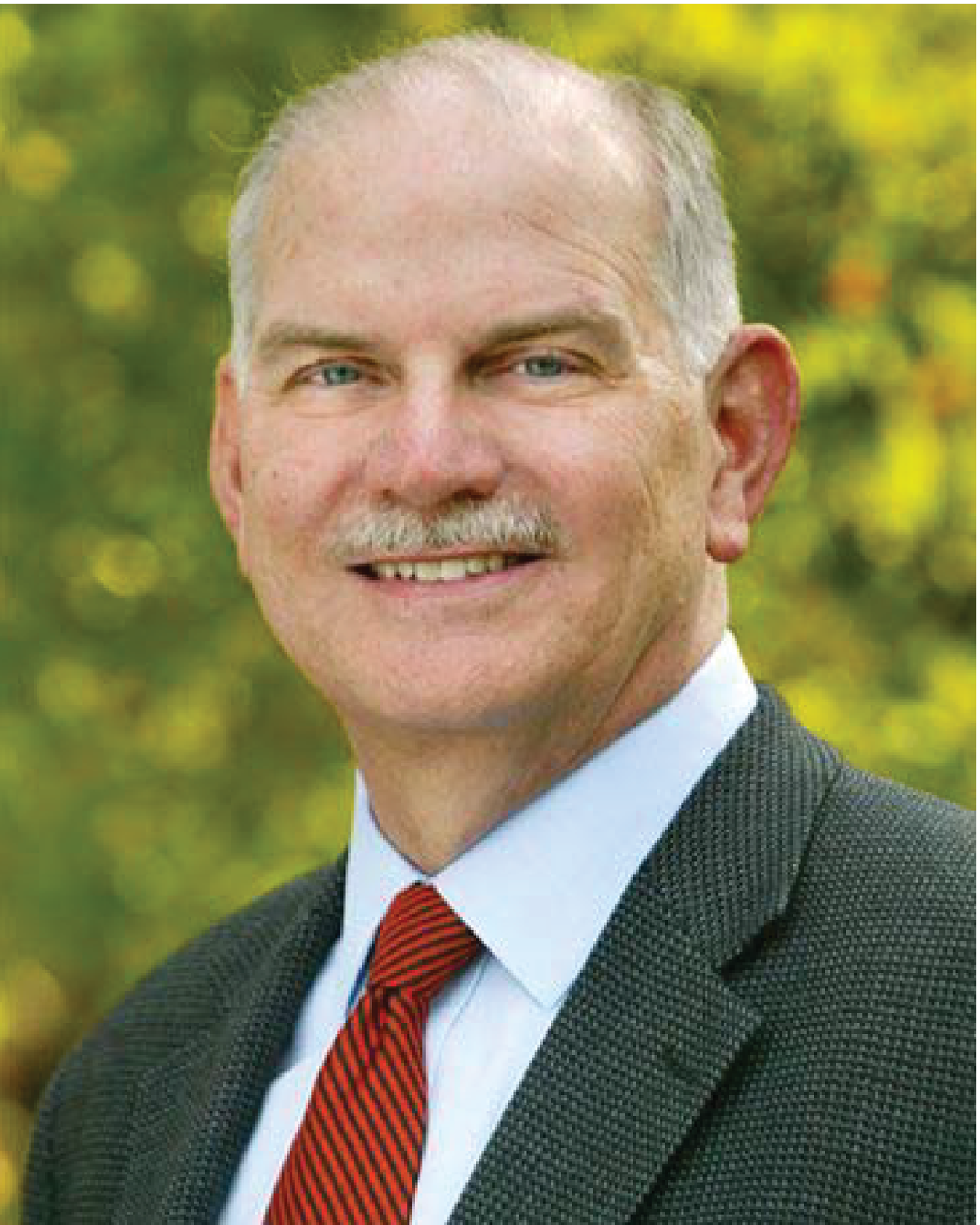}}]{H. Vincent Poor}
(S'72, M'77, SM'82, F'87) received the Ph.D. degree in EECS from Princeton University in 1977.  From 1977 until 1990, he was on the faculty of the University of Illinois at Urbana-Champaign. Since 1990 he has been on the faculty at Princeton, where he is the Michael Henry Strater University Professor of Electrical Engineering and Dean of the School of Engineering and Applied Science. Dr. Poor's research interests are in the areas of stochastic analysis, statistical signal processing, and information theory, and their applications in wireless networks and related fields such as social networks and smart grid. Among his publications in these areas are the recent books \textit{Principles of Cognitive Radio} (Cambridge, 2013) and \textit{Mechanisms and Games for Dynamic Spectrum Allocation} (Cambridge University Press, 2014).

Dr. Poor is a member of the National Academy of Engineering and the
National Academy of Sciences, and a foreign member of Academia
Europaea and the Royal Society. He is also a fellow of the American
Academy of Arts and Sciences, the Royal Academy of Engineering
(U.K.), and the Royal Society of Edinburgh. He received the
Technical Achievement and Society Awards of the IEEE Signal
Processing Society in 2007 and 2011, respectively. Recent
recognition of his work includes the 2014 URSI Booker Gold Medal,
and honorary doctorates from several universities in Europe and
Asia, including an honorary D.Sc. from Aalto University in 2014.
\end{IEEEbiography}
\end{document}